\newtheorem{theorem}{Theorem}
\newtheorem{lemma}{Lemma}
\newtheorem{definition}{Definition}
\newtheorem{proposition}{Proposition}
\newtheorem{remark}{Remark}
\newtcolorbox[auto counter]{mybox}[2][]{
	enhanced,
	breakable,
	colback=blue!5!white,
	colframe=blue!75!black,
	fonttitle=\bfseries,
	title=Box \thetcbcounter: #2,#1
}
\newtheorem*{theorem*}{Theorem}
\newcommand{\mbv}{\mathbf{v}}
\newcommand{\mbe}{\mathbf{e}}
\begin{document}

\title{Variational LOCC-Assisted Quantum Circuits for Long-Range Entangled States}

\author{Yuxuan Yan}
\thanks{These authors contributed equally to this work.}
\affiliation{Center for Quantum Information, Institute for Interdisciplinary Information Sciences, Tsinghua University, Beijing 100084, China}

\author{Muzhou Ma}
\thanks{These authors contributed equally to this work.}
\affiliation{Department of Electronic Engineering, Tsinghua University, Beijing 100084, China}

\author{You Zhou}
\email{you\_zhou@fudan.edu.cn}
\affiliation{Key Laboratory for Information Science of Electromagnetic Waves (Ministry of Education), Fudan University, Shanghai 200433, China}
\author{Xiongfeng Ma}
\email{xma@tsinghua.edu.cn}
\affiliation{Center for Quantum Information, Institute for Interdisciplinary Information Sciences, Tsinghua University, Beijing 100084, China}

\begin{abstract}
Long-range entanglement is an important quantum resource, particularly for topological orders and quantum error correction. In reality, preparing long-range entangled states requires a deep unitary circuit, which poses significant experimental challenges. A promising avenue is offered by replacing some quantum resources with local operations and classical communication (LOCC). With these classical components, one can communicate outcomes of midcircuit measurements in distant subsystems, substantially reducing circuit depth in many important cases. However, to prepare general long-range entangled states, finding LOCC-assisted circuits of a short depth remains an open question. Here, to address this challenge, we propose a quantum-classical hybrid algorithm to find optimal LOCC protocols for preparing ground states of given Hamiltonians. In our algorithm, we introduce an efficient way to estimate parameter gradients and use such gradients for variational optimization. Theoretically, we establish the conditions for the absence of barren plateaus, ensuring trainability at a large system size. Numerically, the algorithm accurately solves the ground state of long-range entangled models, such as the perturbed Greenberger–Horne–Zeilinger state and surface code. Our results demonstrate the advantage of our method over conventional unitary variational circuits: the practical advantage in the accuracy of estimated ground-state energy and the theoretical advantage in creating long-range entanglement.
\end{abstract}

\maketitle

Long-range entanglement structures play an essential role in many quantum information processing scenarios. By definition, long-range entangled states require deep unitary circuits to be prepared from a product state; asymptotically, the depth requirement is unbounded when the system scales \cite{wen_topological_2013, chen_local_2010}. A typical example is the Greenberger–Horne–Zeilinger (GHZ) state \cite{greenberger_going_1989}, which finds important applications in quantum communication, cryptography, and computation. In addition, as a canonical example exhibiting topological orders, the surface code state is also long-range entangled and consequently serves as a resource for topological quantum memory and computation \cite{bravyi_quantum_1998, kitaev_fault-tolerant_2003}. More generally, quantum topological order and error-correcting codes essentially rely on long-range entanglement \cite{yi_complexity_2024}.

Important as they are, the preparation of long-range entangled states is severely challenged by their depth requirements \cite{aharonov_limitations_1996, muller-hermes_relative_2016,yan_limitations_2023}. Fortunately, a promising solution is found by introducing local operations and classical communication (LOCC). With the assistance of LOCC, circuits would include midcircuit measurements and feedforwards, where measurement results determine the subsequent quantum local operations. Note that LOCC-assisted circuits are also referred to by other terminologies, including circuits with midcircuit measurements, adaptive circuits, and dynamic circuits.
The essential role of LOCC in these circuits is to communicate information among distant subsystems and thereby create long-range correlations, which necessitates a significantly larger depth for circuits with only local unitary gates \cite{friedman_locality_2022}.
The introduction of LOCC brings great success in preparing surface code states in a constant depth \cite{raussendorf_long-range_2005, aguado_creation_2008, piroli_quantum_2021}, and it was later extended to other topologically ordered systems \cite{tantivasadakarn_long-range_2024, bravyi_adaptive_2022, tantivasadakarn_shortest_2023, tantivasadakarn_hierarchy_2023, li_symmetry-enriched_2023} and other important states for quantum information, such as the GHZ state \cite{piroli_quantum_2021, zhu_nishimoris_2023}, W states \cite{piroli_quantum_2021, buhrman_state_2024, piroli_approximating_2024}, and Dicke states \cite{buhrman_state_2024, piroli_approximating_2024}. Recently, progress has also been made in preparing tensor-network states \cite{lu_measurement_2022, smith_deterministic_2022, malz_preparation_2024, smith_constant-depth_2024, gunn_phases_2023}. These findings provide theoretical insights into the power of LOCC-assisted circuits in terms of depth reduction.

Despite the successes of previous important cases, the full potential of LOCC in general state preparation is largely unexplored. A systematic approach is still needed to find short-depth LOCC-assisted circuits for preparing general states, especially long-range entangled ones. In essence, this requires optimization over various LOCC-assisted circuits. Such an optimization task exists for scenarios without LOCC assistance and is often solved by variational quantum algorithms \cite{peruzzo_variational_2014, omalley_scalable_2016, kandala_hardware-efficient_2017, google_ai_quantum_and_collaborators_hartree-fock_2020, cerezo_variational_2021, ferguson_measurement-based_2021, guo_experimental_2024}. However, extending the variational toolkits to the LOCC case remains unexplored and faces potential challenges. It is unclear whether previous techniques, especially the quantum gradient protocol, are compatible with LOCC. Furthermore, the number of parameters that define complex LOCC protocols is large in general, which would induce computational inefficiency. Note that a naive approach may even introduce exponentially many parameters. More importantly, the trainability of variational LOCC-assisted circuits is an open problem. In variational algorithms, we frequently encounter barren plateaus. These refer to the phenomenon of gradients vanishing exponentially with the system size when the circuit depth increases \cite{mcclean_barren_2018}, worsened by physical noise accumulation \cite{wang_noise-induced_2021}. It is unclear how the introduction of LOCC would impact trainability: it may alleviate the depth problem but could also exacerbate barren plateaus without proper design.

\begin{figure}[tpb!]
    \centering
    \includegraphics[width=1.0\linewidth]{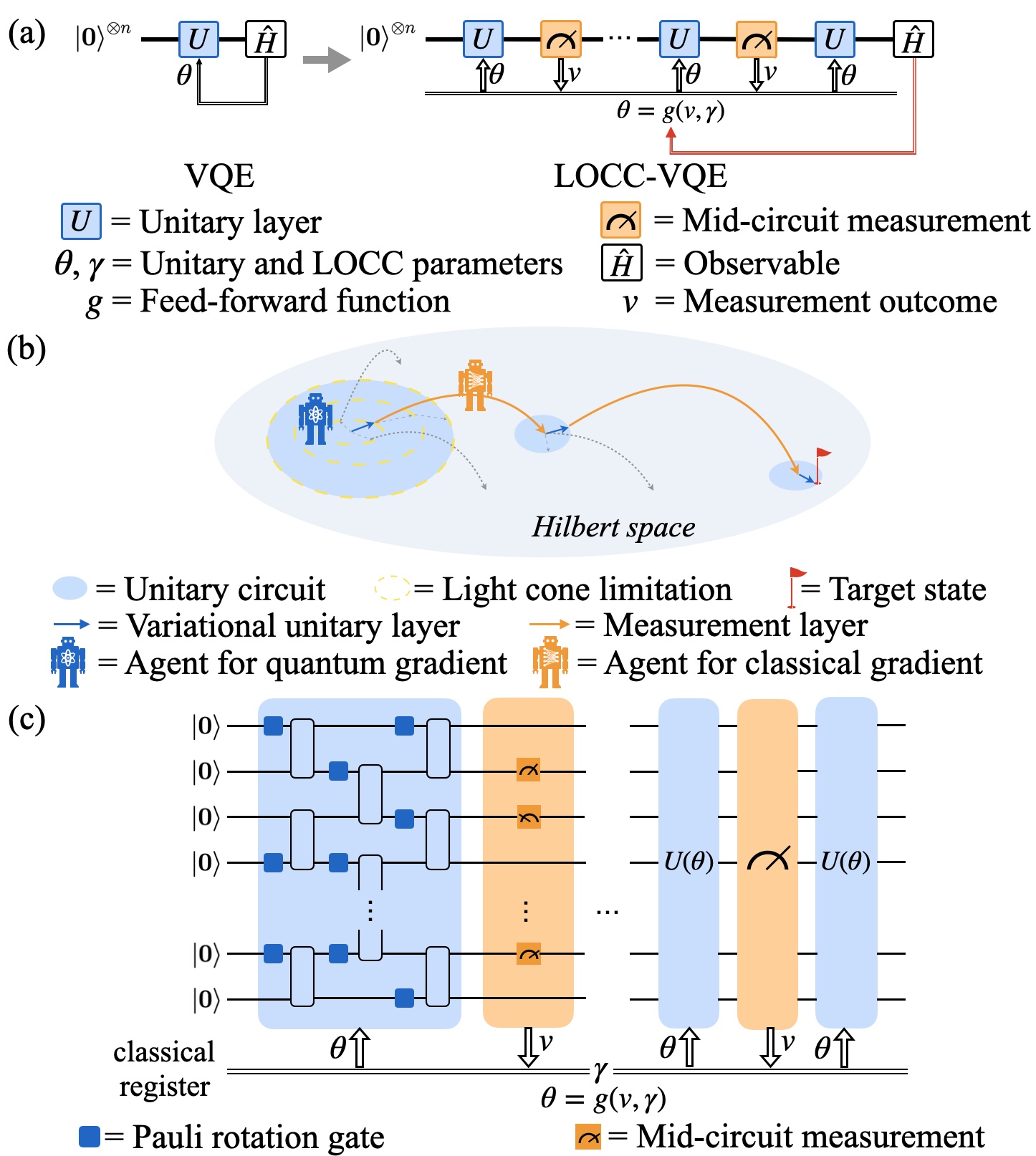}
    \caption{\justifying{LOCC-VQE scheme. Blue blocks represent unitary circuits, and orange blocks represent midcircuit measurements.  \textbf{(a)} Algorithm structure of LOCC-VQE. Gradient information is obtained for optimizing the LOCC parameters $\gamma$ in the feedback loop, represented by the red arrow. This feedback loop is the main difference compared to variational quantum algorithms. \textbf{(b)} Exploring the Hilbert space with LOCC-VQE. Among all possible paths, represented by dash arrows, agents obtained the gradient information to find an optimized state preparation path, represented by solid arrows, to reach the target state. As illustrated by the dotted yellow circles, LOCC enables jumps of states in the Hilbert space, breaking the light cone limitation on unitary circuits. \textbf{(c)} Variational LOCC-assisted quantum circuits. Parametrized unitary layers, represented in blue, and midcircuit measurement layers, represented in orange, are applied alternatively.}}
    \label{fig:loccvqe}
\end{figure}

In this work, we tackle these challenges by proposing the LOCC-assisted variational quantum eigensolver (LOCC-VQE) to solve the ground state of a given Hamiltonian, as depicted in Fig.~\ref{fig:loccvqe}.
To figure out the optimal LOCC protocol, we propose an efficient quantum-classical hybrid approach to estimate parameter gradients and present explicit and reasonable conditions for the absence of barren plateaus. Based on these gradients, we can perform gradient-based optimization to minimize the energy and solve the ground-state problem via LOCC-VQE, as depicted in Fig.~\ref{fig:loccvqe}(a). Notably, LOCC protocols can be selected with flexibility, allowing the incorporation of classical computations in various forms, such as look-up tables or neural networks. By choosing appropriate protocols, we provide LOCC-assisted advantages while ensuring trainability by avoiding barren plateaus, as depicted in Fig.~\ref{fig:loccvqe} (b).

We propose the following general parameterization of LOCC protocols: in a LOCC-assisted circuit, each unitary gate layer encompasses Pauli rotation gates with parameters $\theta$, as depicted in Fig.~\ref{fig:loccvqe} (c). These gate parameters are determined by classical protocol with measurement outcomes denoted by $\mbv$. The classical protocol is a function $g$ with LOCC parameters $\gamma$, by which gate parameters are computed as $\theta = g(\gamma, \mbv)$.
Note that our parameterization is not limited to any specific circuit architecture or LOCC protocol structures.

Since LOCC parameters $\gamma$ are independent variables that define the circuit, we will denote the output state as $\Psi_\gamma$. The state can be considered the mixture of postselected states with different midcircuit measurement outcomes,
\begin{equation}
    \Psi_\gamma = \sum_{\mbv} P_{\theta}(\mbv) \Phi_{\theta, \mbv}, \label{eq:decomp-normal}
\end{equation}
where $P_{\theta}(\mbv)$ is the probability of measurement outcome $\mbv$ and $\Phi_{\theta, \mbv}$ is the postmeasurement state.

Our goal is to find the optimal $\gamma$ that minimizes the energy of a given Hamiltonian $\hat{H}$, i.e., $\Tr \left[\hat{H} \Psi_{\gamma}\right]$. Note that ideally, when optimized to the ground state, different $\Phi_{\theta, \mbv}$ corresponding to different midcircuit measurement outcomes will be converted to the same pure ground state of the Hamiltonian $\hat{H}$.

To apply efficient gradient-based optimization, we will need the gradients $\nabla_{\gamma} \Tr \left[\hat{H} \Psi_{\gamma}\right]$. We first introduce the following proposition, which expresses these gradients as a quantum-classical composition.
\begin{proposition}
\label{prop:Quantum gradients for variational LOCC-assisted circuits}
    The gradients of a variational LOCC-assisted circuit can be obtained as the inner product of two matrices,
    \begin{equation}
        \frac{\partial \Tr \left[\hat{H} \Psi_{\gamma}\right]}{\partial \gamma_k} = \Tr[\left( \mathbf{G}^{C_k} \right)^T \mathbf{G}^{Q_k}] \label{eq:grad}.
    \end{equation}
    Here, the two matrices correspond to the quantum gradient,
    \begin{equation}
        \mathbf{G}^{Q_k} = \left\{\mathrm{g}^{Q_k}_{i,j} \right\} = \left\{\frac{\partial \Tr \left[\hat{H} P_{\theta}(\mbv_i) \Phi_{\theta, \mbv_i} \right]}{\partial \theta_j}\Bigg|_{\theta_j = g_j(\gamma, \mbv_i)} \right\}
    \end{equation}
    and classical gradients,
    \begin{equation}
    \mathbf{G}^{C_k} = \left\{\mathrm{g}^{C_k}_{i,j} \right\} = \left\{\frac{\partial g_j(\gamma, \mbv_i)}{\partial \gamma_k} \right\}
\end{equation}
 respectively, where $\mbv_i$ are measurement outcomes and $\theta_j$ are Pauli gate rotation angles in the circuit.
\end{proposition}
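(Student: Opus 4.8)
The plan is to start from the decomposition in Eq.~\eqref{eq:decomp-normal} and simply push the derivative $\partial/\partial\gamma_k$ through it, using the fact that the state $\Psi_\gamma$ depends on $\gamma$ only through the gate angles $\theta = g(\gamma,\mbv)$. So the first step is to write
\begin{equation}
\Tr[\hat H \Psi_\gamma] = \sum_{\mbv} \Tr[\hat H\, P_\theta(\mbv)\,\Phi_{\theta,\mbv}]\Big|_{\theta = g(\gamma,\mbv)},
\end{equation}
and treat the summand as a function $F(\theta,\mbv) := \Tr[\hat H\, P_\theta(\mbv)\,\Phi_{\theta,\mbv}]$ evaluated along the $\gamma$-dependent curve $\theta = g(\gamma,\mbv)$. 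Note that each component $\mbv_i$ of the sum carries its \emph{own} substitution $\theta_j = g_j(\gamma,\mbv_i)$, which is why the final formula ends up being a sum over the index $i$ as well as $j$.

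The second step is the chain rule. For a fixed $\mbv_i$,
\begin{equation}
\frac{\partial}{\partial\gamma_k}\, F(g(\gamma,\mbv_i),\mbv_i) = \sum_j \frac{\partial F(\theta,\mbv_i)}{\partial\theta_j}\bigg|_{\theta = g(\gamma,\mbv_i)} \cdot \frac{\partial g_j(\gamma,\mbv_i)}{\partial\gamma_k}.
\end{equation}
Recognizing the first factor as the matrix element $\mathrm{g}^{Q_k}_{i,j}$ (this requires noting that the quantum-gradient matrix defined in the statement is actually $k$-independent as a set of numbers — the superscript $k$ is vestigial, or one can read it as the fixed matrix whose entries are probed at the point $g(\gamma,\mbv_i)$) and the second as $\mathrm{g}^{C_k}_{i,j}$, summing over $i$ gives
\begin{equation}
\frac{\partial \Tr[\hat H \Psi_\gamma]}{\partial\gamma_k} = \sum_i\sum_j \mathrm{g}^{C_k}_{i,j}\,\mathrm{g}^{Q_k}_{i,j} = \sum_{i,j} (\mathbf{G}^{C_k})^T_{j,i}\,(\mathbf{G}^{Q_k})_{i,j} = \Tr[(\mathbf{G}^{C_k})^T \mathbf{G}^{Q_k}],
\end{equation}
which is exactly Eq.~\eqref{eq:grad}. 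The last equality is just the definition of the Frobenius (Hilbert–Schmidt) inner product of two matrices written as a trace, together with the bookkeeping identity $(\mathbf{G}^{C_k})^T_{j,i} = \mathrm{g}^{C_k}_{i,j}$.

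The only genuinely substantive point — and the one I would be most careful about — is justifying that the quantum piece $\partial F(\theta,\mbv_i)/\partial\theta_j$ is well-defined and computable, i.e. that $\Tr[\hat H\, P_\theta(\mbv_i)\,\Phi_{\theta,\mbv_i}]$ depends smoothly on the Pauli rotation angle $\theta_j$ even though it bundles together both the outcome probability $P_\theta$ and the (normalized) post-measurement state $\Phi_{\theta,\mbv_i}$. The clean way to handle this is to work with the \emph{unnormalized} post-measurement branch, i.e. the subnormalized operator $\tilde\Phi_{\theta,\mbv_i} = P_\theta(\mbv_i)\Phi_{\theta,\mbv_i}$, which is a polynomial (in fact quadratic, via $K \cdot K^\dagger$ sandwiching) in the entries of each Pauli rotation $e^{-i\theta_j \hat P_j/2}$; then $\partial\theta_j$ acts by the standard generator insertion and everything is manifestly smooth, with no $0/0$ issues from the normalization. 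I would also remark (rather than prove in detail here) that this same structure is what makes the quantum gradient efficiently measurable — each $\mathrm{g}^{Q_k}_{i,j}$ is a derivative of an expectation value with respect to a single Pauli angle, so a parameter-shift–type identity applies on each measurement branch — but for the proposition itself only the chain-rule identity is needed, and the argument above is essentially complete once the smoothness of $F$ is noted.
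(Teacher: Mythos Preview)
Your proposal is correct and follows essentially the same route as the paper: decompose $\Psi_\gamma$ into measurement branches via the unnormalized states $\tilde\Phi_{\theta,\mbv}=P_\theta(\mbv)\Phi_{\theta,\mbv}$, then apply the chain rule through $\theta=g(\gamma,\mbv)$ and recognize the resulting double sum as the Frobenius inner product. The paper's proof additionally carries the computation one step further by inserting the parameter-shift identity (its Lemma~\ref{lem:grad}) for the quantum factor, but as you correctly observe this is about efficient estimation rather than about establishing Eq.~\eqref{eq:grad} itself.
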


Based on this quantum-classical composition structure, we propose a hybrid protocol to estimate the gradient in Eq.~\eqref{eq:grad}. The approach consists of the following three steps: first, for estimation of the quantum part, $ \mathbf{G}^{Q_k}$, inspired by parameter shift rules \cite{peruzzo_variational_2014, omalley_scalable_2016, kandala_hardware-efficient_2017}, we shift each circuit parameter $\theta_j$ by $\pm \frac{\pi}{2}$, which gives the following shifted LOCC protocols
\begin{equation}
    g_{j\pm}(\gamma, \mbv) = g(\gamma, \mbv) \pm \frac{\pi}{2} \mbe_j,
\end{equation}
where $\mbe_j$ is the unit vector along the $j$th parameter direction, and estimate the expectation value of $\hat{H}$ by randomly picking a Pauli term from $\hat{H}$ and performing measurements on the corresponding basis. This measuring scheme provides an unbiased estimator of the energy expectation of the output state $\Phi_{\theta, \mbv}$,
i.e., $\langle \hat{H} \rangle_{j\pm,\mbv} = \Tr \left[\hat{H} \Phi_{\theta, \mbv}\right]\Big|_{\theta_j = g_{j\pm}(\gamma, \mbv)}$, where $\mbv$ is the measurement outcome. In each shot, we will store the midcircuit measurement results, $\mbv$, and the single-shot estimation of $\langle \hat{H} \rangle_{j\pm,\mbv}$.

Having the quantum measurement data from the first step, we perform the other two steps of classical processing. With each measured $\mbv$ as the input of the function $g_{j\pm}(\gamma, \mbv)$, we calculate the gradient $\frac{\partial g_j(\gamma, \mbv)}{\partial \gamma_k}$ for each circuit parameter $\theta_j$. Then, we reweight the previously obtained single-shot contributions by factors, $\pm \frac{\partial g_j(\gamma, \mbv)}{\partial \gamma_k}$, where a minus sign is assigned when $\theta_j$ is shifted by $-\frac{\pi}{2}$, and sum over the index $j$ and all samples $\mbv$. The details and proof of the algorithm are available in Appendix~\ref{app:grad}.

The advantage of this approach is its low sampling overhead. Naively, one may try postselection on the midcircuit measurement outcomes, $\mbv$, which leads to an exponentially large sample complexity. In our hybrid protocol design, we avoid the post-selection and ensure efficiency by introducing the aforementioned classical processing steps. Note that such an addition does not increase sampling overhead. Because our algorithm reuses sample data among different LOCC parameters $\gamma_k$, the sample complexity is only related to the number of tunable Pauli rotations in the circuit, which is the same for variational unitary circuits. Based on this gradient estimation algorithm, we can solve ground states via gradient-based optimization workflow illustrated in Fig.~\ref{fig:loccvqe} (a).

\noindent\textit{Conditions for nonvanishing gradients.}---
With LOCC-VQE, preparing long-range entangled states with a low circuit depth is possible. Such low depths appear promising for avoiding barren plateaus \cite{mcclean_barren_2018, wang_noise-induced_2021}. However, we still have to be cautious about how the additional LOCC components impact trainability. Here, we establish the following conditions under which gradients are nonvanishing, thereby ensuring trainability. The formal version and proof details are available in Appendix~\ref{appendix:absence of BP}.

\begin{theorem}[informal]
\label{thm:BP}
    The following conditions can ensure the gradients will not vanish as the number of qubits scales in variational LOCC-assisted circuits:
\begin{enumerate}
    \item[$\mathbf{\mathcal{A}1}.$] Hamiltonian is local.---The observable $\hat{H}$ is the sum of terms whose support has a constant size.
    \label{prop: local Hamiltonian}
    \item[$\mathbf{\mathcal{A}2}.$] The circuit depth is constant.
    \item[$\mathbf{\mathcal{A}3}.$] The gradient of the function $g$ will not exponentially decay as the size of its input increases.
    \item[$\mathbf{\mathcal{A}4}.$] Each LOCC protocol parameter $\gamma_j$ controls a constant number of quantum gates. The function $g$ has a constant support regarding each $\gamma_j$.
    \item[$\mathbf{\mathcal{A}5}.$] Each quantum gate parameter $\theta$ is controlled by a constant number of midcircuit measurement results.
\end{enumerate}
\end{theorem}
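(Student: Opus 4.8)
The plan is to bound the magnitude of each gradient component $\partial_{\gamma_k}\Tr[\hat H\Psi_\gamma]$ from below by a quantity that stays $\Omega(\mathrm{poly}(1/n))$ as the number of qubits $n$ grows. I would start from the factorization in Proposition~\ref{prop:Quantum gradients for variational LOCC-assisted circuits}, writing $\partial_{\gamma_k}\Tr[\hat H\Psi_\gamma]=\sum_{i,j}\mathrm{g}^{C_k}_{i,j}\,\mathrm{g}^{Q_k}_{i,j}$, so that it suffices to control the two matrices separately. The first step is a locality/light-cone argument: by $\mathcal{A}1$ write $\hat H=\sum_\alpha \hat h_\alpha$ with each $\hat h_\alpha$ of constant support, and by $\mathcal{A}2$ the unitary part has a constant-depth causal cone, so each term $\hat h_\alpha$ is only affected by a constant number of Pauli rotation angles $\theta_j$. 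By $\mathcal{A}5$ each such $\theta_j=g_j(\gamma,\mbv_i)$ depends on only a constant number of measurement bits, and by $\mathcal{A}4$ each $\gamma_k$ enters only a constant number of the $g_j$. Combining these, for fixed $k$ the double sum over $(i,j,\alpha)$ collapses: only a constant number of $j$'s contribute to each $\hat h_\alpha$, only a constant number of $\hat h_\alpha$ are sensitive to a given $\gamma_k$ (again by the causal-cone bound applied to the gates that $\gamma_k$ controls), and the sum over measurement outcomes $\mbv_i$ effectively reduces to a marginal over the constant-size relevant window. This reduces the gradient to a sum of $O(1)$ genuinely $n$-independent terms, each of which is an expectation value of a bounded local observable on a constant-size reduced state, hence $O(1)$ in magnitude.

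The second step is to argue these $O(1)$ surviving terms do not all conspire to cancel and are not individually exponentially small. For the quantum factor $\mathrm{g}^{Q_k}_{i,j}$, I would invoke the parameter-shift identity used right after the proposition: each entry equals $\tfrac12\big(\Tr[\hat H\,\rho^{(i,j+)}_\gamma]-\Tr[\hat H\,\rho^{(i,j-)}_\gamma]\big)$ restricted to the causal cone, a difference of two expectations of a local observable on constant-size states, which is generically $\Theta(1)$ and at worst $O(1)$; crucially, because the cone is constant-size, there is no $2^{-n}$ concentration of the Haar/2-design type that drives ordinary barren plateaus. For the classical factor, assumption $\mathcal{A}3$ is exactly what prevents $\mathrm{g}^{C_k}_{i,j}=\partial g_j/\partial\gamma_k$ from decaying exponentially in the size of $g_j$'s input; combined with $\mathcal{A}4$–$\mathcal{A}5$ (so the input itself is constant-size), each nonzero $\mathrm{g}^{C_k}_{i,j}$ is $\Omega(\mathrm{poly}(1/n))$ or better. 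Multiplying the two bounded, non-exponentially-small factors and summing the $O(1)$ contributions yields the claim. I would state the conclusion as: $\big|\partial_{\gamma_k}\Tr[\hat H\Psi_\gamma]\big|$ is not exponentially small in $n$ at a typical (e.g.\ randomly initialized) $\gamma$.

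The main obstacle is the non-cancellation / lower-bound half of the argument: the locality assumptions straightforwardly give an \emph{upper} bound showing the gradient is well-defined and $O(1)$, but ruling out that the constant number of surviving terms sum to something exponentially (or polynomially) tiny requires more care. I expect the honest route is a variance-style statement — show that the expectation over some natural distribution on the parameters $\gamma$ (or over a small random perturbation) of $\big(\partial_{\gamma_k}\Tr[\hat H\Psi_\gamma]\big)^2$ is $\Omega(1)$, using that the reduced dynamics on the constant-size cone forms at most a constant-depth ensemble rather than a full unitary design, so second moments are bounded away from zero. A subtlety to flag is $\mathcal{A}3$: it is a hypothesis on the \emph{family} of classical post-processing functions $g$ (look-up tables trivially satisfy it, shallow neural networks with bounded weights do as well), and the theorem should be read as conditional on choosing such a $g$; I would make explicit in the proof where each of $\mathcal{A}1$–$\mathcal{A}5$ is used so the necessity of each is transparent, and defer the full quantitative constants to the appendix.
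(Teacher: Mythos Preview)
Your proposal is correct and tracks the paper's proof closely: both start from the Frobenius-inner-product form of Proposition~\ref{prop:Quantum gradients for variational LOCC-assisted circuits}, use $\mathcal{A}1$--$\mathcal{A}2$ with a backward-light-cone argument to show only $O(1)$ gate parameters and $O(1)$ measurement bits affect each Hamiltonian term, use $\mathcal{A}4$ to restrict $\mathbf{G}^{C_k}$ to $O(1)$ nonzero rows, use $\mathcal{A}5$ (and the light-cone bound) to reduce each relevant row to $O(1)$ distinct values, and invoke $\mathcal{A}3$ to keep the surviving classical entries from shrinking. The paper packages the sparsity slightly more structurally---Lemmas~4 and~6 count distinct values per row rather than ``$O(1)$ terms''---but your ``marginal over the constant-size relevant window'' is the same observation.

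The one genuine divergence is the non-cancellation endgame, which you rightly identify as the delicate step. The paper does \emph{not} compute a variance over random $\gamma$; instead it models the (normalized) vectorized $\mathbf{G}^{C_k}$ and $\mathbf{G}^{Q_k}$ as uniformly distributed on the unit sphere in the effective $\mathcal{D}$-dimensional space and proves a geometric probability lemma (Lemma~7): two random unit vectors in $\mathbb{R}^{\mathcal{D}}$ with $\mathcal{D}=O(1)$ have inner product $\geq\eta$ with probability $\Omega(e^{-(\mathcal{D}-1)\eta}-0.2^{(\mathcal{D}-1)/2})$, an $n$-independent constant. Your proposed variance-over-initialization argument is the more standard barren-plateau methodology and is arguably on firmer ground, since the paper's uniform-sphere assumption is a worst-case heuristic rather than something derived from the actual circuit ensemble; on the other hand, the paper's route is shorter and makes the ``only $\mathcal{D}$ degrees of freedom'' counting completely explicit. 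Both approaches land on a probabilistic statement about typical parameters rather than a pointwise lower bound, exactly as you note.
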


\begin{proof}[Proof sketch]
    The gradient of a LOCC parameter in Eq.~\eqref{eq:grad} is the inner product of two vectorized high-dimensional matrices, $\mathbf{G}^{Q_k}$ and $\mathbf{G}^{C_k}$. We first need to ensure these two matrices do not vanish individually. For $\mathbf{G}^{Q_k}$ not to vanish, we acquire conditions $\mathbf{\mathcal{A}1}$ and $\mathbf{\mathcal{A}2}$. These conditions can be understood from the perspective of an information propagation light cone, where information can only spread linearly in a geometrically local unitary circuit. Note that these conditions are also necessary to prevent the quantum gradient from vanishing exponentially in a unitary variational circuit, which is a special case of LOCC-assisted circuits. Similarly,  we introduce condition $\mathbf{\mathcal{A}3}$ to prevent the vanishing of classical gradients $\mathbf{G}^{C_k}$.

The above three conditions are not sufficient enough because the nonvanishing of individual components, $\mathbf{G}^{Q_k}$ and $\mathbf{G}^{C_k}$, does not guarantee the nonvanishing of the inner product. In fact, without further restraints, such an inner product will typically vanish due to the cancellations among degrees of freedom. Therefore, we introduce two additional conditions on the LOCC protocol, $\mathbf{\mathcal{A}4}$ and $\mathbf{\mathcal{A}5}$. These conditions pose a sparse structure in $\mathbf{G}^{C_k}$. By combining this with the sparsity in $\mathbf{G}^{Q_k}$, we can ensure that only a constant degree of freedom can contribute to the inner product, preventing the gradients from decaying as the number of qubits $n$ in the asymptotic limit.
\end{proof}

Theorem~\ref{thm:BP} ensures the large-scale effectiveness of our approach from a theoretical perspective. With nonvanishing gradients, the gradient-based optimization can find the optimal state that minimizes the energy. It is worth noting that Theorem~\ref{thm:BP} is stronger than the frequently referenced absence of barren plateaus \cite{mcclean_barren_2018, wang_noise-induced_2021, zhang_absence_2024}, which excludes exponential vanishing but still allows for vanishing to some polynomial scalings.

Our approach holds significant relevance for realistic experimental settings. While prior work has established the absence of barren plateaus in solving long-range entangled states in noise-free settings \cite{zhang_absence_2024}, noise-induced barren plateaus remain a challenge due to the deep circuit depth required \cite{aharonov_limitations_1996, wang_noise-induced_2021, yan_limitations_2023}. Our approach additionally prevents noise-induced barren plateaus through circuit depth reduction by LOCC. By minimizing circuit depth, our method enables the scalable exploration of long-range entangled states, even in the presence of noise.

\noindent\textit{One-dimensional chain models.}---
We employ LOCC-VQE to investigate systems exhibiting long-range entanglement through numerical simulations. Our simulations leverage the tensor-network-based circuit simulator introduced in \cite{zhang_tensorcircuit_2023}. For parameter optimization, we utilize the Adam optimizer \cite{kingma_adam_2015}. The quantum circuits consist of two-qubit gates parametrized using Cartan decomposition \cite{khaneja_cartan_2000, earp_constructive_2005}. The corresponding code is publicly accessible on GitHub \cite{loccvqegithub}.

Our first example is the parent Hamiltonian of the GHZ state with perturbations. The GHZ state is a long-range entangled state, whose parent Hamiltonian can be chosen as a one-dimensional (1D) Ising model, depicted in Fig.~\ref{fig:Qubit_lattices} (a), $\hat{H}_{\text{Ising}} = - \sum_{\langle i, j\rangle} Z_i Z_j$, where $\langle i, j\rangle$ represent the near-neighbor sites. The problem with the Ising model is its ground-state degeneracy. The ground-state subspace of the Ising model contains product states, which are trivially short-range entangled. To break this degeneracy and make the GHZ state the unique ground state, a term of $n$-qubit tensor product of $X$ operator is introduced, $- h\bigotimes_i X_i$, where $\hat{H}$ denotes the energy gap created above the GHZ state as the unique ground state. To exhibit the robustness of LOCC-VQE, we add perturbations in terms of Pauli operators on each site, resulting in
\begin{equation}
    \hat{H}_{\text{GHZ}} = - (1-\lambda)\sum_{\langle i, j\rangle}Z_iZ_j - (h-\lambda)\bigotimes_i X_i - \lambda\sum_i P_i,
    \label{eq:ghzpert}
\end{equation}
where $\lambda$ is the perturbation strength and $P_i\in{X_i,Y_i,Z_i}$ on the $i$th site. Here, we use the same Pauli operators to perturb all qubits, physically representing the direction of an external uniform field.

\begin{figure}[htbp!]
    \centering
    \includegraphics[width=1.0\linewidth]{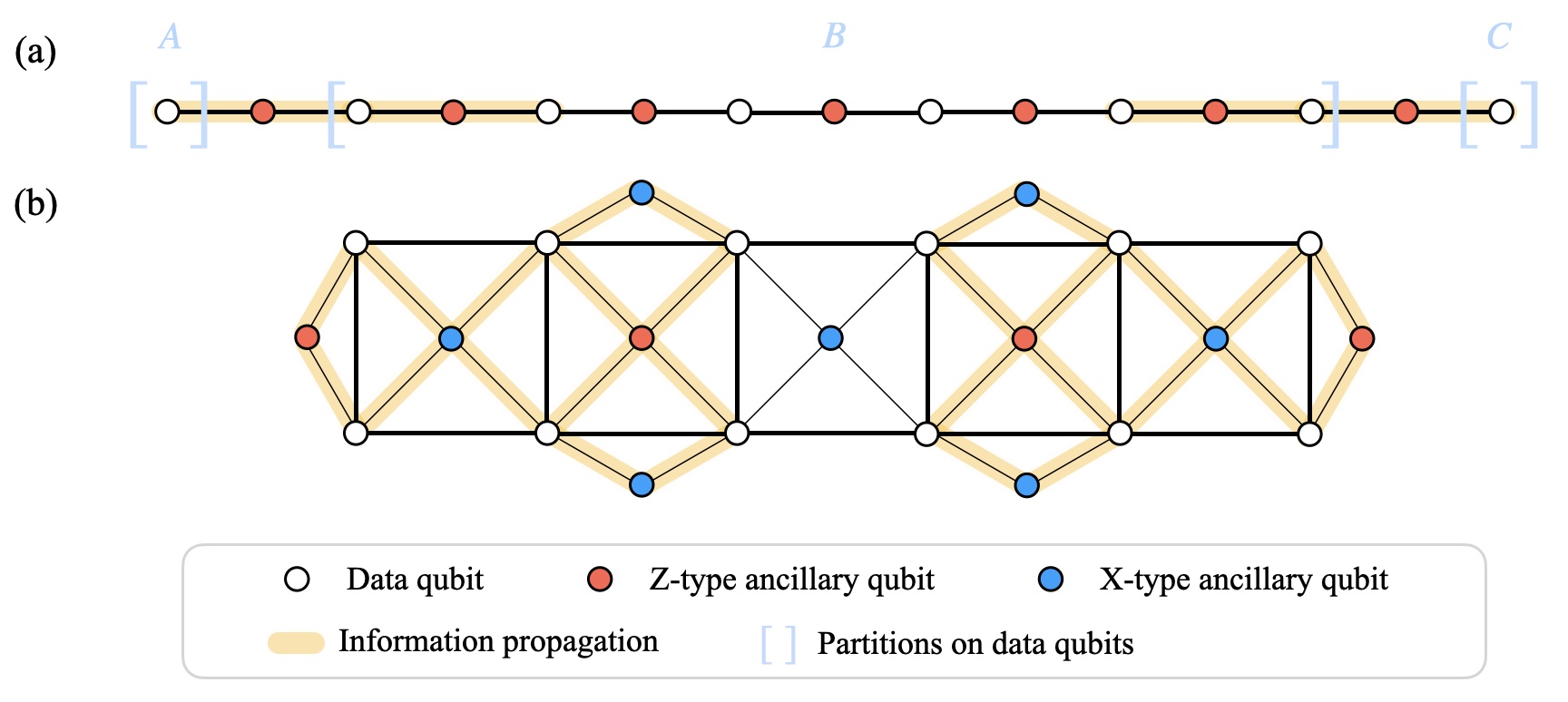}
    \caption{\justifying{Qubit layout for \textbf{(a)} the parent Hamiltonian of the GHZ state and \textbf{(b)} the surface code. To present the advantage of LOCC-VQE later, we define the following qubit partitions: in (a), light-blue regions $A$, $B$, and $C$ form a partition of the data qubits. In (b), the yellow region represents the light cone of information propagation through local two-qubit unitary gates.
    }}
    \label{fig:Qubit_lattices}
\end{figure}

In our numerical tests, we simulate an eight-qubit model, set $h=16$, and test various values of perturbation strength, $\lambda$. We first test the energy achieved by LOCC-VQE and unitary VQE of the same depth of $2$ with the perturbed Hamiltonian introduced above. For variational training, we set the same number of iterations. With variational LOCC-assisted circuit ansatz, information can propagate beyond the light cone limitation placed on unitary circuit ansatz, making it possible for long-range entanglement to emerge within shallow depth. The circuit design is inspired by the LOCC protocol for preparing a non-perturbed GHZ state as illustrated in \cite{piroli_quantum_2021}. For each pair of neighboring data qubits, we couple them with an ancilla qubit with parametrized gates and measure the ancilla qubits, the measurement outcomes are later fed into the LOCC protocol function $g$, detailed in Appendix~\ref{app:arch}.

\begin{figure}[htbp!]
    \centering
    \includegraphics[width=1.0\linewidth]{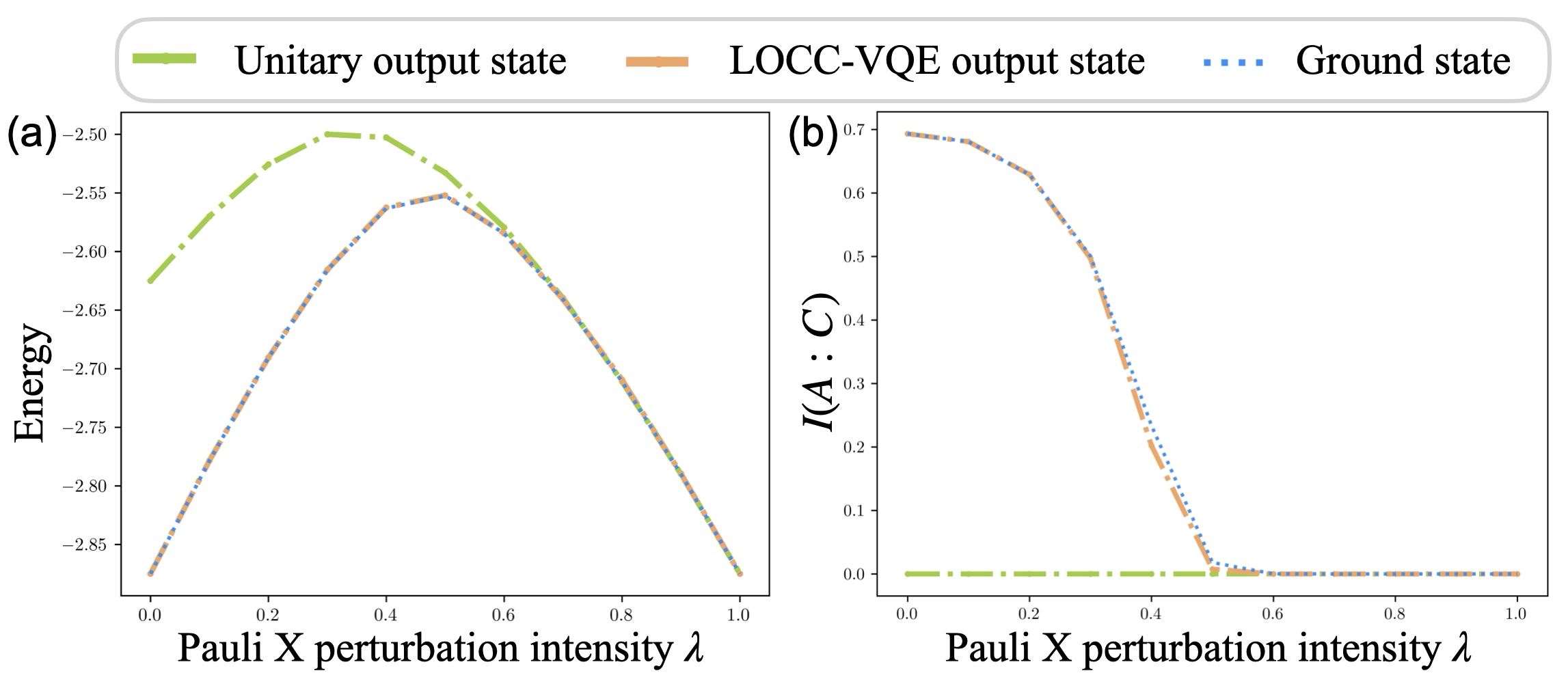}
    \caption{\justifying{Solving the parent Hamiltonian of the eight-qubit GHZ state with Pauli $X$ perturbations with depth-two circuits. \textbf{(a)} Comparison of the energy optimization results through LOCC-VQE and unitary VQE with depth-two circuits. \textbf{(b)} Comparison of the quantum mutual information (QMI) between subsystems, $I(A:C) = S(A) + S(C) - S(AC)$, where subsystems $A$ and $C$ are shown in Fig.~\ref{fig:Qubit_lattices}~(a) and $S(\cdot)$ denotes the von Neumann entropy. Here, we assume unlimited shots. Results with finite shots in midcircuit measurements are also studied and available in Appendix~\ref{app:suppnum}.}}
    \label{fig:GHZ_8_X_EandQMI}
\end{figure}

Our results suggest LOCC-VQE's advantages over its unitary counterpart with the same depth in predicting ground-state energy and QMI, as shown in Fig.~\ref{fig:GHZ_8_X_EandQMI}. For ground-state energy, LOCC-VQE can achieve a relative accuracy of $10^{-3}$ in any perturbation direction over the entire range of perturbation intensity, while unitary VQE can only achieve a relative accuracy of $10^{-1}$, as shown in Fig.~\ref{fig:GHZ_8_X_EandQMI}~(a). A precision gap of $2$ orders of magnitude between LOCC-VQE and unitary VQE is demonstrated when the perturbation intensity $\lambda$ in Eq.~\eqref{eq:ghzpert} is small, where long-range entanglement dominates the target ground state. To further demonstrate a provable advantage, we use QMI between subsystems $A$ and $C$ to characterize long-range entanglement and thereby separate LOCC-VQE and its unitary counterpart. Theoretically, given a unitary circuit depth, the QMI vanishes outside of the light cones of information propagation. In contrast, as shown in Fig.~\ref{fig:GHZ_8_X_EandQMI}(b), states prepared by LOCC-VQE have a nonzero QMI, the amount of which matches the ground state, while QMI is exactly zero for unitary VQE of the same circuit depth. This implies the advantage of LOCC-VQE originated from the ability to break the light cone of information propagation. The details of light-cone arguments are available in Appendix~\ref{appendix:QMI_def}.

Note that, strictly speaking, the $n$-qubit tensor product Puali-$X$ term in Eq.~\eqref{eq:ghzpert} does not satisfy the local Hamiltonian condition in Theorem~\ref{thm:BP}. Interestingly, our numerical results of LOCC-VQE can still prepare the ground state with high precision, as shown above. This implies LOCC-VQE's potential to work well even when conditions in Theorem~\ref{thm:BP} are relaxed. In a similar model without such a long-range term, the 1D transverse-field Ising model, we have also demonstrated the accurate results of the ground state preparation, whose Hamiltonian satisfies the local condition in Theorem~\ref{thm:BP}, as shown in Appendix~\ref{app:suppnum} along with other models of interest.

\noindent\textit{Perturbed rotated surface code.}---
The ground states of the surface code Hamiltonian also possess long-range entanglement, enabling logical information storage. In our numerical tests, we use the rotated surface code \cite{bombin_optimal_2007, kovalev_improved_2012, fowler_surface_2012, anderson_homological_2013, tomita_low-distance_2014}, which is a variant of Kiteav's toric code \cite{bravyi_quantum_1998, kitaev_fault-tolerant_2003} but with open boundary condition. We consider the perturbation of a magnetic field in the $Z$ direction, which results in the following Hamiltonian:
\begin{equation}
\label{eq: H_sur_perturbed}
    \hat{H}_{\text{sur}}(\lambda) = -(1-\lambda)\sum_v A_v - (1-\lambda)\sum_p B_p - \lambda \sum_{i=1}^{N_xN_y} Z_i.
\end{equation}
In this model, the qubits are arranged in a regular lattice, as shown in Fig.~\ref{fig:Qubit_lattices}(b). Here, $N_x$ and $N_y$ represent the width and height of the regular lattice, respectively. $A_v$ and $B_p$ are stabilizers for the unperturbed rotated surface code, while $\lambda$ represents the strength of the perturbation. The $Z$-type stabilizers $A_v$ and the $X$-type stabilizers $B_p$ are arranged in an alternating checkerboard pattern.
The purpose of highlighting $X$-type and $Z$-type ancillary qubits in Fig.~\ref{fig:Qubit_lattices}(b) is to illustrate the model better, and we do not distinguish them in our numerical experiments, treating them equally when initializing parameters. The robustness and flexibility of LOCC-VQE make achieving high precision in ground-state preparation possible without requiring prior knowledge of the type of ancillary qubits.

The results of preparing ground states of perturbed rotated surface code using LOCC-VQE are shown in Fig.~\ref{fig:Energy_surface_code}. LOCC-VQE can reach a relative error of $10^{-2}$ in energy precision for all perturbation intensity. Even with a limited lattice size, where the effect of long-range entanglement on the estimated ground state energy is not as strong as in larger lattice sizes, a precision gap of $3$ orders of magnitude between the ground states prepared by LOCC-VQE and unitary VQE is demonstrated, as shown in Fig.~\ref{fig:Energy_surface_code}, when the perturbation intensity $\lambda$ in Eq.~\eqref{eq: H_sur_perturbed} is small, demonstrating the advantage of LOCC-VQE.

In future work, various promising enhancements for LOCC-VQE will be explored. One could employ architecture search \cite{zhang_differentiable_2022} or further combine our approach with classical computations \cite{yuan_quantum_2021, huang_tensor-network-assisted_2023}. Theoretically, trainability is also worth further exploring. For example, one could relax the conditions in Theorem~\ref{thm:BP} to allow gradient vanishing to some extent while still preserving trainability.

\begin{figure}[htbp!]
    \centering
    \includegraphics[width=1.0\linewidth]{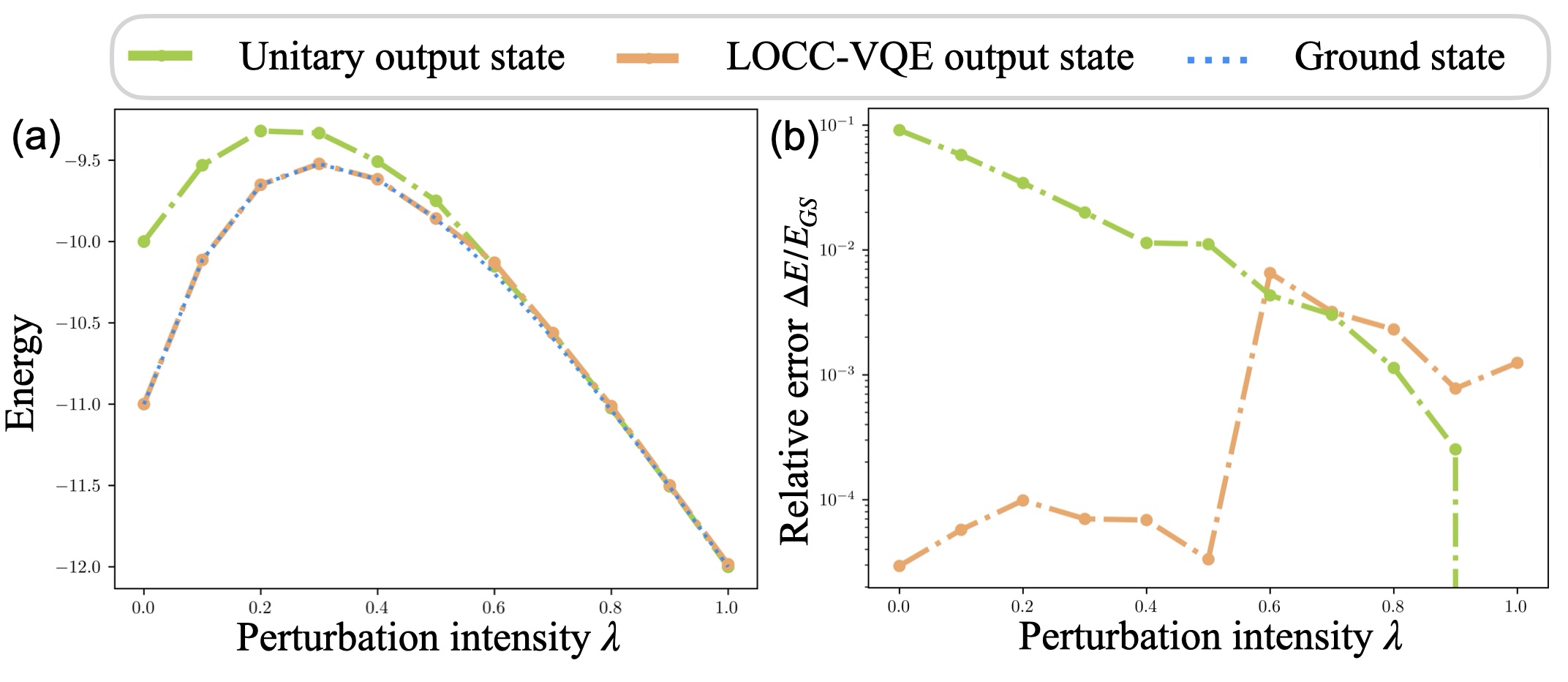}
    \caption{\justifying{Solving perturbed surface code. \textbf{(a)} Comparison between the energy optimization results through LOCC-VQE and unitary VQE with depth four circuits. \textbf{(b)} Comparison between the relative error of ground state energy optimization results, $\frac{\Delta E}{E_{GS}}=\frac{E-E_{GS}}{E_{GS}}$, through LOCC-VQE and unitary VQE with depth four circuits.}}
    \label{fig:Energy_surface_code}
\end{figure}

The experimental demonstration of LOCC-VQE has already been feasible on various quantum computing platforms with midcircuit measuring capabilities \cite{foss-feig_experimental_2023, iqbal_topological_2024, moses_race-track_2023, baumer_efficient_2023, corcoles_exploiting_2021}.  When running on a quantum computer, the computationally intensive part of LOCC-VQE will be greatly accelerated. We anticipate that experiments will unlock the full potential of LOCC-VQE in preparing long-range entangled states for quantum error correction, topological matters, and algorithms.

\begin{acknowledgments}
    We thank Luca Dellantonio, Zhenyu Du, Guoding Liu, Zhong-Xia Shang, Nathanan Tantivasadakarn, Xiao Yuan, and Shi-Xin Zhang for the helpful discussion. Y.~Y., M.~M., and X.~M. acknowledge the support of the National Natural Science Foundation of China (NSFC) Grants No.~12174216 and the Innovation Program for Quantum Science and Technology Grants No.~2021ZD0300804 and No.~2021ZD0300702. Y.~Z acknowledges the support of NSFC Grant No.~12205048, the Innovation Program for Quantum Science and Technology Grant No.~2021ZD0302000, Shanghai Science and Technology Innovation Action Plan Grant No.~24LZ1400200, and the start-up funding of Fudan University.
\end{acknowledgments}

\bibliography{./materials/loccvqe.bib}

\newpage

\onecolumngrid

\clearpage

\appendix

\section{Proofs and details of gradient estimation protocol}
\label{app:grad}

We start by revisiting the definition of LOCC-assisted circuits and formalizing it in the following way.

\begin{definition}[LOCC-assisted circuits] \label{def:adacirc}
Starting from the initial state $\ket{\Psi_0}$, we alternatively apply unitaries or measurements. Assumed that the outcomes are $\mbv = \{ v_j \}$, the unnormalized outcome state with respect to the outcome $\mbv = \{ v_j \}$ will be
\begin{equation}
    \ket{\tilde{\Phi}_{\mbv}} = U^{(d)}_\mbv \Pi^{(d-1)}_\mbv \cdots \Pi^{(1)}_\mbv U^{(1)} \ket{\Psi_0}. \label{eq:adapt}
\end{equation}
Here, $U^{(i)}_\mbv$ denotes unitaries, and $\Pi^{(i)}_\mbv$ denotes measurement projectors. Unitaries $U^{(i)}_\mbv$ may depend on earlier measurement outcomes corresponding to projectors $\Pi_j$ for $j<i$. The LOCC-assisted circuits, on average, will generate the following state:
\begin{equation}
    \Psi = \sum_{\mbv} \ketbra{\tilde{\Phi}_{\mbv}}{\tilde{\Phi}_{\mbv}}.
    \label{eq:mix}
\end{equation}
\end{definition}

The depth of LOCC-assisted circuits is the sum of all unitary layers, as defined below.

\begin{definition}[Depth of a LOCC-assisted circuit]
\label{def:depth}
    The depth of a LOCC-assisted circuit is the sum of the depths of $U^{(i)}_\mbv$ in Eq.~\eqref{eq:adapt}. The depth of each $U^{(i)}_\mbv$, denoted by $d_i$, is defined as the following: Decompose the unitary by $d_i$ layers of gates $\prod_{j=1}^{d_i} \bigotimes_{k} U^{(i)}_{j,k}$, where within a same layer, labelled by $j$, $U^{(i)}_{j,k}$ are two-qubit unitaries that do not overlap with each other.
\end{definition}

Our gradient estimation protocol is built on Proposition~1, for which we give proof below.

\begin{proof}[Proof of Proposition~1]
\begin{equation}
    \begin{split}
        \frac{\partial \Tr \left[\hat{H} \Psi_{\gamma}\right]}{\partial \gamma_j} & = \sum_{\mbv} \frac{\partial}{\partial \gamma_j} \Tr \left[\hat{H}\tilde{\Phi}_{\theta, \mbv}\right] \\
        & = \sum_{i,\mbv} \frac{\partial g_i(\gamma, \mbv)}{\partial \gamma_j} \frac{\partial \Tr \left[\hat{H}\tilde{\Phi}_{\theta, \mbv}\right]}{\partial \theta_i}\Bigg|_{\theta = g(\gamma, \mbv)} \\
        & = \sum_{i,\mbv} \frac{1}{2} \frac{\partial g_i(\gamma, \mbv)}{\partial \gamma_j} \left(\Tr \left[\hat{H} \tilde{\Phi}_{\theta, \mbv}\right]\Big|_{\theta = g_{i+}(\gamma, \mbv)} - \Tr \left[\hat{H} \tilde{\Phi}_{\theta, \mbv}\right]\Big|_{\theta = g_{i-}(\gamma, \mbv)}\right) \\
        & = \sum_{i,\mbv} \frac{1}{2} \frac{\partial g_i(\gamma, \mbv)}{\partial \gamma_j} \left(P_{\theta = g_{i+}(\gamma, \mbv)}(\mbv) \Tr \left[\hat{H} \Phi_{\theta, \mbv}\right]\Big|_{\theta = g_{i+}(\gamma, \mbv)} - P_{\theta = g_{i-}(\gamma, \mbv)}(\mbv) \Tr \left[\hat{H} \Phi_{\theta, \mbv}\right]\Big|_{\theta = g_{i-}(\gamma, \mbv)}\right).
    \end{split}
\end{equation}
The third equality uses parameter shifts for quantum gradients. Unlike unitary variational circuits, we adapt parameter shifts to circuits with projectors, as detailed in the lemma that follows.

It is more convenient to write the gradients into a matrix form. Denote the length of $\theta$ as $l$, and the length of $\mbv$ as $m$. Let the matrix representing the quantum gradients be $\mathbf{G}^{Q_k} \in \mathbb{R}^{l\times 2^m}$, calculated by parameter-shift, and the matrix representing the classical gradients be $\mathbf{G}^{C_k} \in \mathbb{R}^{l\times 2^m}$. Their expressions are
\begin{equation}
\label{eq:G_Q_and_G_C}
\begin{split}
    \mathbf{G}^{Q_k} &= \{\mathrm{g}^{Q_k}_{i,j}\} = \{\frac{\partial \Tr \left[\hat{H}\tilde{\Phi}_{\theta, \mbv_i}\right]}{\partial \theta_j}\Bigg|_{\theta_j = g_j(\gamma, \mbv_i)}\}, \\
    \mathbf{G}^{C_k} &= \{\mathrm{g}^{C_k}_{i,j}\} = \{\frac{\partial g_j(\gamma, \mbv_i)}{\partial \gamma_k}\} ,
\end{split}
\end{equation}
where $\mbv_i$ is a bit string of length $m$ representing the $i^{th}$ possible way of projectors.

We can rewrite the quantum gradient for variational LOCC-assisted circuits as
\begin{equation}
\label{eq:Hadamard product form of the gradients}
    \frac{\partial \Tr \left[\hat{H} \Psi_{\gamma}\right]}{\partial \gamma_k} =  \langle\mathbf{G}^{C_k}, \mathbf{G}^{Q_k} \rangle_F =  \Tr[\left( \mathbf{G}^{C_k} \right)^T \mathbf{G}^{Q_k}]
\end{equation}
where $\langle\cdot, \cdot \rangle_F$ is the Frobenius inner product, which can be seen as the inner product of the vectorized representation of $\left(\mathbf{G}^{C_k}\right)^T$ and $\left(\mathbf{G}^{Q_k}\right)^T$.
\end{proof}

\begin{lemma}[Parameter shifts with midcircuit measurements] \label{lem:grad}
\begin{equation}
    \frac{\partial \Tr \left[\hat{H}\tilde{\Phi}_{\theta, \mbv}\right]}{\partial \theta_i} = \frac{1}{2} \left(\Tr \left[\hat{H} \tilde{\Phi}_{\theta + \frac{\pi}{2} \mbe_i, \mbv}\right] - \Tr \left[\hat{H} \tilde{\Phi}_{\theta - \frac{\pi}{2} \mbe_i, \mbv}\right]\right).
\end{equation}
\end{lemma}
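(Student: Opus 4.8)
The plan is to reduce the statement to the standard parameter-shift rule for a single Pauli rotation gate, applied to the (unnormalized) post-selected state $\ket{\tilde\Phi_{\theta,\mbv}}$ rather than to a fully unitary circuit output. First I would fix the outcome string $\mbv$ and the index $i$, and isolate the single gate whose angle is $\theta_i$. Writing out Eq.~\eqref{eq:adapt}, the gate $e^{-i\theta_i P/2}$ (with $P$ a Pauli operator, $P^2 = \mathbb{1}$) appears once; everything to its left (further unitaries and projectors, all of which may depend on $\mbv$ but not on $\theta_i$) I would absorb into an operator $L$, and everything to its right into $\ket{R} = \cdots \ket{\Psi_0}$, so that $\ket{\tilde\Phi_{\theta,\mbv}} = L\, e^{-i\theta_i P/2} \ket{R}$.

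Next I would compute $\Tr[\hat H \tilde\Phi_{\theta,\mbv}] = \bra{R} e^{i\theta_i P/2} L^\dagger \hat H L\, e^{-i\theta_i P/2} \ket{R}$, define $M := L^\dagger \hat H L$ (a Hermitian operator independent of $\theta_i$), and expand $e^{\pm i\theta_i P/2} = \cos(\theta_i/2)\mathbb{1} \mp i\sin(\theta_i/2) P$ using $P^2 = \mathbb{1}$. The expectation then becomes a linear combination of $\cos^2(\theta_i/2)$, $\sin^2(\theta_i/2)$, and $\cos(\theta_i/2)\sin(\theta_i/2)$ terms, i.e.\ it has the form $a + b\cos\theta_i + c\sin\theta_i$ for $\theta_i$-independent constants $a,b,c$. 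Differentiating gives $\partial_{\theta_i}(\,\cdot\,) = -b\sin\theta_i + c\cos\theta_i$, and one checks directly that $-b\sin\theta_i + c\cos\theta_i = \tfrac12\big[(a + b\cos(\theta_i+\pi/2) + c\sin(\theta_i+\pi/2)) - (a + b\cos(\theta_i-\pi/2) + c\sin(\theta_i-\pi/2))\big]$, which is exactly the claimed identity for a single $\mbv$. Summing (trivially) over the fixed $\mbv$ — or rather noting the identity holds term by term — yields the Lemma. Since the shift $\theta_i \to \theta_i \pm \tfrac\pi2$ is precisely $\theta \to \theta \pm \tfrac\pi2 \mbe_i$, this matches the statement.

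The one genuine subtlety — and the place I would be most careful — is that $\tilde\Phi_{\theta,\mbv}$ is \emph{unnormalized}: the norm $P_\theta(\mbv) = \langle\tilde\Phi_{\theta,\mbv}|\tilde\Phi_{\theta,\mbv}\rangle$ itself depends on $\theta_i$. This is why the standard parameter-shift derivation, which is usually stated for $\Tr[\hat H \rho]$ with $\rho$ normalized, needs the mild adaptation flagged in the text. The resolution is that the argument above never normalizes: it works directly with the bilinear form $\bra{R} e^{i\theta_i P/2} M e^{-i\theta_i P/2} \ket{R}$, which is still of the form $a + b\cos\theta_i + c\sin\theta_i$ regardless of whether $\ket{R}$ (or $L$) is norm-preserving, because the only algebraic fact used is $P^2 = \mathbb{1}$. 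So the presence of projectors $\Pi^{(i)}_\mbv$ inside $L$ and $\ket{R}$ is harmless. A secondary point to state cleanly is the assumption that each $\theta_i$ parametrizes a single Pauli-rotation gate (as declared in Sec.~\ref{section: Variational LOCC-assisted quantum circuits}); if one $\theta_i$ drove several commuting Pauli rotations one would instead get a multi-term shift rule, so I would note this restriction explicitly and then conclude.
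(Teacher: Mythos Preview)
Your proof is correct and rests on the same structural idea as the paper's: fix $\mbv$, isolate the single Pauli rotation $e^{-i\theta_i P/2}$, absorb everything else (unitaries \emph{and} projectors) into $\theta_i$-independent linear operators on either side, and observe that the derivation only uses $P^2=\mathbb{1}$, so the presence of non-unitary projectors is immaterial. Where you diverge is in the final computational step. The paper differentiates via the product rule to obtain the commutator $[\Sigma_i,\rho]$ and then invokes the algebraic identity $[\Sigma_i,\rho] = i\big(W(\tfrac{\pi}{2})\rho W(\tfrac{\pi}{2})^\dagger - W(-\tfrac{\pi}{2})\rho W(-\tfrac{\pi}{2})^\dagger\big)$ to convert this into the parameter-shifted difference. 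You instead expand $e^{\pm i\theta_i P/2}$ explicitly, show the expectation takes the sinusoidal form $a+b\cos\theta_i+c\sin\theta_i$, and verify the shift rule directly for such functions. Both routes are standard; yours is arguably more self-contained (no auxiliary commutator identity to check) and makes the role of the unnormalized state more transparent, since the sinusoidal form holds for any bilinear $\bra{R}\,\cdot\,\ket{R}$ without reference to norm. Your explicit flagging of the single-gate-per-$\theta_i$ assumption is a useful clarification that the paper leaves implicit.
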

\begin{proof}
    By Eq.~\eqref{eq:adapt},
    \begin{equation}
        \tilde{\Phi}_{\theta, \mbv} = U^{(d)} \Pi^{(d-1)} \cdots \Pi^{(1)} U^{(1)} \ket{\Psi_0} \bra{\Psi_0} {(U^{(1)})}^\dagger \Pi^{(1)} \cdots \Pi^{(d-1)} {(U^{(d)})}^\dagger,
    \end{equation}
    where we omit the subscript $\mbv$ for simplicity.

    Express $U^{(k)}$ as $U^{(k)} = V W(\theta_i) V^\prime$, where $W(\theta_i) = e^{-i\frac{\theta_i}{2}\Sigma_i}$, $\Sigma_i$ is a Pauli operator and $\theta_i$ is assumed to act non-trivially on $U^{(k)}$. Note that
    \begin{equation}
        \begin{split}
            \frac{\partial U^{(k)}}{\partial \theta_i} &= -\frac{i}{2} V \Sigma_iW(\theta_i) V^\prime, \\
            \frac{\partial (U^{(k)})^{\dagger}}{\partial \theta_i} &= \frac{i}{2} {(V^\prime)}^\dagger \Sigma_i W(-\theta_i) V^\dagger.
        \end{split}
    \end{equation}
    Then, we calculate the gradient,
    \begin{equation}
        \begin{split}
            \frac{\partial \Tr \left[\hat{H}\tilde{\Phi}_{\theta, \mbv}\right]}{\partial \theta_i} &=  \Tr \left[\hat{H} U^{(d)} \Pi^{(d-1)} \cdots \frac{\partial U^{(k)}}{\partial \theta_i} \cdots \Pi^{(1)} U^{(1)} \ket{\Psi_0} \bra{\Psi_0} {(U^{(1)})}^\dagger \Pi^{(1)} \cdots \Pi^{(d-1)} {(U^{(d)})}^\dagger + h.c.\right] \\
            &= -\frac{i}{2} \Tr \left[\hat{H} U^{(d)} \Pi^{(d-1)} \cdots V W(\theta_i) \left[ \Sigma_i, \rho \right] W(-\theta_i) V^\dagger \cdots \Pi^{(d-1)} {(U^{(d)})}^\dagger \right].
        \end{split}
    \end{equation}
    where $\rho = V^\prime U^{(i-1)} \Pi^{(i-2)} \cdots \Pi^{(1)} U^{(1)} \ket{\Psi_0} \bra{\Psi_0} {(U^{(1)})}^\dagger \Pi^{(1)} \cdots \Pi^{(i-2)} {(U^{(i-1)})}^\dagger {(V^\prime)}^\dagger$.

    Further, using the fact that $\left[ \Sigma_j, \rho \right] = i \left[ W(\frac{\pi}{2}) \rho W(\frac{\pi}{2})^\dagger - W(-\frac{\pi}{2}) \rho W(-\frac{\pi}{2})^\dagger \right]$, we have
    \begin{equation}
        \begin{split}
             \frac{\partial \Tr \left[\hat{H}\tilde{\Phi}_{\theta, \mbv}\right]}{\partial \theta_i} &= -\frac{i}{2} \Tr \left[\hat{H} U^{(d)} \Pi^{(d-1)} \cdots V W(\theta_i) \left[ W(\frac{\pi}{2}) \rho W(\frac{\pi}{2})^\dagger - W(-\frac{\pi}{2}) \rho W(-\frac{\pi}{2})^\dagger \right] W(-\theta_i) V^\dagger \cdots \Pi^{(d-1)} {(U^{(d)})}^\dagger \right] \\
             &= U^{(d)} \Pi^{(d-1)} \cdots \Pi^{(1)} U^{(1)} \ket{\Psi_0} \bra{\Psi_0} {(U^{(1)})}^\dagger \Pi^{(1)} \cdots \Pi^{(d-1)} {(U^{(d)})}^\dagger.
        \end{split}
    \end{equation}
\end{proof}

Compared with the usual quantum gradient, the lemma considers circuits with projectors as the consequence of measurements. The proof above shows that projectors do not affect the quantum gradients for unitary circuits.

As explained in the main text, Proposition~1 implies the gradient can be obtained as a combination of quantum and classical parts, which can be estimated via a quantum-classical hybrid algorithm. We provide details in Algorithm~\ref{alg:grad}.

\SetKwComment{Comment}{/* }{ */}

\begin{algorithm}
\caption{Gradient estimation protocol for LOCC-VQE} \label{alg:grad}
\KwData{Observable $\hat{H}$; ansatz $\Psi_\gamma$ defined by $g(\gamma, \mbv)$; estimation sample rounds $M$.}
\KwResult{Estimated gradient $\{G_j\}_{j=1,\cdots,|\gamma|}$.}

\For{$i\leftarrow 1$ \KwTo $|\theta|$}{
    $g_{i\pm}(\gamma, \mbv) \gets g(\gamma, \mbv) \pm \frac{\pi}{2} \mbe_i$\;
    $\mathcal{C}_{i+} \gets \emptyset$\;
    \For{$k\leftarrow 1$ \KwTo $M$ \Comment*[r]{$|\cdot|$ denotes the parameter vector length}}
    {
        Run the LOCC-assisted circuit using $g_+$\Comment*[r]{quantum computer}
        $\mbv \gets$ midcircuit measurement results\;
        $c \gets$ one-shot estimation of $\hat{H}$ using $g_+$\;
        Add the pair, $(\mbv, c)$, to $\mathcal{C}_{i+}$\;
    }
    Do the same procedure to get $\mathcal{C}_{i-}$ from $g_-$\;
}

\For{$j\leftarrow 1$ \KwTo $|\gamma|$}{
    $G_+ \gets 0$\;
    $G_- \gets 0$\;
    \For{$i\leftarrow 1$ \KwTo $|\theta|$}{
        \For{$(\mbv, c) \in \mathcal{C}_{i+}$}{
            $G_+ \gets G_+ + \frac{1}{2} \frac{\partial g_i(\gamma, \mbv)}{\partial \gamma_j} c$\;
        }
        \For{$(\mbv, c) \in \mathcal{C}_{-}$}{
            $G_- \gets G_+ + \frac{1}{2} \frac{\partial g_i(\gamma, \mbv)}{\partial \gamma_j} c$\;
        }
    }
     $G_j \gets \frac{1}{M |\theta|} (G_+ - G_-)$\;
}
\end{algorithm}

\section{Proof of absence of baren plateaus}
\label{appendix:absence of BP}

In this section, we show the proof of the absence of baren plateaus, as suggested by Theorem~1 in the main text. First, we give its formal version below and summarize the proof in Fig.~\ref{fig:proof}.

\begin{theorem*}[Formal version of Theorem~1]
\label{thm:formal_thm_1}
    We assume that the vectorized representation of $\left(\mathbf{G}^{C_k}\right)^T$ and $\left(\mathbf{G}^{Q_k}\right)^T$ are uniformly distributed on the unit sphere in $\mathbb{R}^\mathcal{D}$ where $\mathcal{D}$ is the dimension of the parameters space.
    Then, the following conditions can ensure the gradients $\frac{\partial \Tr \left[\hat{H} \Psi_{\gamma}\right]}{\partial \gamma_k} = \Tr[\left( \mathbf{G}^{C_k} \right)^T \mathbf{G}^{Q_k}]$ will not vanish as the number of qubits scales in variational LOCC-assisted circuits:
\begin{enumerate}
    \item[$\mathbf{\mathcal{A}1}.$] Hamiltonian is local.---The observable $\hat{H}$ is the sum of terms whose support has a constant size.
    \item[$\mathbf{\mathcal{A}2}.$] The circuit depth is constant.
    \item[$\mathbf{\mathcal{A}3}.$] The gradient of the function $g$ will not exponentially decay as the size of its input increases.
    \item[$\mathbf{\mathcal{A}4}.$] Each LOCC protocol parameter $\gamma_j$ controls a constant number of quantum gates.---The function $g$ has a constant support regarding each $\gamma_j$.
    \item[$\mathbf{\mathcal{A}5}.$] Each quantum gate parameter $\theta$ is controlled by a constant number of midcircuit measurement results.
\end{enumerate}

\end{theorem*}

The assumption of uniform distributions is made for technical simplicity. The choice of the initial distribution of parameters varies and case-by-case analysis is impractical.
Notice that, regarding the inner product structure as shown in Proposition~1, this technical assumption is reasonable for proving the non-decaying gradients in LOCC-VQE since the inner product of two vectors uniformly sampled from the high-dimensional sphere decays exponentially with the dimension of the space in expectation.
For simplicity, we also focus on the case that $\hat{H}$ has only one term in the following proof. We can add them if multiple terms are in $\hat{H}$.

\begin{figure}[hbtp!]
	\begin{tikzpicture}[
		scale=1,
		every text node part/.style={align=center},
		every rectangle node/.style={rounded corners},
		>=latex
		]
		\node[rectangle,draw,minimum width=3.5cm, minimum height=1.6cm, text centered, anchor=north west, dashed]
        (condition_1) at (0,0)
        {Local Hamiltonian\\(Condition~$\mathbf{\mathcal{A}1}$)};

		\node[rectangle,draw,minimum width=3.5cm, minimum height=1.6cm, text centered, anchor=north west, dashed]
        (condition_2) at ($ (condition_1.north east) + (0.3cm,0) $)
        {Constant-depth\\(Condition~$\mathbf{\mathcal{A}2}$)};

		\node[rectangle,draw,minimum width=3.5cm, minimum height=1.6cm, text centered, anchor=north west, dashed]
        (condition_3) at ($ (condition_2.north east) + (0.3cm,0) $)
        {Non-trivial\\ classical gradients\\(Condition~$\mathbf{\mathcal{A}3}$)};

		\node[rectangle,draw,minimum width=3.5cm, minimum height=1.6cm, text centered, anchor=north west, dashed]
        (condition_4) at ($ (condition_3.north east) + (0.3cm,0) $)
        {Bounded support $\gamma_j$\\(Condition~$\mathbf{\mathcal{A}4}$)};

		\node[rectangle,draw,minimum width=3.5cm, minimum height=1.6cm, text centered, anchor=north west, dashed]
        (condition_5) at ($ (condition_4.north east) + (0.3cm,0) $)
        {Bounded support $\theta$\\(Condition~$\mathbf{\mathcal{A}5}$)};

		\node[rectangle,draw,minimum width=3.5cm, minimum height=1.6cm, text centered, anchor=north west]
        (lemma_2) at ($ (condition_1.south west) + (0,-2cm) $)
        {Non-trivial\\ quantum gradients\\(Lemma~\ref{lemma: entries of G^Q is not exponentially small})};

		\node[rectangle,draw,minimum width=3.5cm, minimum height=1.6cm, text centered, anchor=north west]
        (lemma_3) at ($ (condition_2.south west) + (0,-2cm) $)
        {Constant projectors\\in light cone\\(Lemma~\ref{lem:constant projector})};

		\node[rectangle,draw,minimum width=3.5cm, minimum height=1.6cm, text centered, anchor=north west]
        (lemma_4) at ($ (lemma_3.south west) + (0,-2cm) $)
        {Bounded \# of different\\ quantum gradients\\(Lemma~\ref{lem:sparsity of G^Q})};

		\node[rectangle,draw,minimum width=3.5cm, minimum height=1.6cm, text centered, anchor=north west]
        (lemma_5) at ($ (condition_4.south west) + (0,-2cm) $)
        {Constant non-zero \\rows in $\mathbf{G}^{C_k}$\\(Lemma~\ref{lem:row sparsity of G^C})};

		\node[rectangle,draw,minimum width=3.5cm, minimum height=1.6cm, text centered, anchor=north west]
        (lemma_6) at ($ (condition_5.south west) + (0,-2cm) $)
        {Bounded \# of different\\classical gradients\\(Lemma~\ref{lem: column sparsity of G^C})};

		\node[rectangle,draw,minimum width=7.3cm, minimum height=1cm, text centered, anchor=north west, line width=0.7mm]
        (G_Q) at ($ (lemma_2.south west) + (0cm,-5.6cm) $)
        {Sparse \& non-trivial $\mathbf{G}^{Q_k}$ \eqref{eq:G_Q_and_G_C}};

		\node[rectangle,draw,minimum width=11.1cm, minimum height=1cm, text centered, anchor=north west, line width=0.7mm]
        (G_C) at ($ (G_Q.north east) + (0.3cm,0) $)
        {Sparse \& non-trivial $\mathbf{G}^{C_k}$ \eqref{eq:G_Q_and_G_C}};

        \node[rectangle,draw,minimum width=3.5cm, minimum height=1.6cm, text centered, anchor=north west]
        (lemma_1) at ($ (G_Q.south west) + (0, -2cm) $)
        {$\langle\mathbf{G}^{C_k}, \mathbf{G}^{Q_k} \rangle_F$\\(Lemma~\ref{lem:grad})};

		\node[rectangle,draw,minimum width=3.5cm, minimum height=1.6cm, text centered, anchor=north east]
        (lemma_7) at ($ (G_C.south east) + (0, -2cm) $)
        {High-dimensional\\inner product\\(Lemma~\ref{lem: high-dimensional inner product})};

		\node[rectangle,draw,minimum width=3.5cm, minimum height=1.6cm, text centered, anchor=north, line width=1mm]
        (result) at ($ (condition_3.south) + (0, -12.2cm) $)
        {Nonvanishing gradient\\$\frac{\partial \Tr \left[\hat{H} \Psi_{\gamma}\right]}{\partial \gamma_k}$};

        \draw[->] (condition_1.south) -- (lemma_3.north);
        \draw[->] (condition_1.south) -- (lemma_2.north)
        node[pos=0.3, fill=white, text=black]{$S\coloneq\operatorname{supp}(\hat{H})$}
        node[pos=0.7, fill=white, text=black]{$\kappa = |S| = \mathcal{O}(1)$};

        \draw[->] (condition_2.south) -- (lemma_2.north)
        node[midway, fill=white, text=black] {Backward light cone $\mathcal{L}_{S}$};

        \draw[->] (condition_2.south) -- (lemma_3.north);

        \draw[->] (lemma_2.south) -- ($ (G_Q.north west) + (1.725cm, 0) $);

        \draw[->] (lemma_3.south) -- (lemma_4.north)
        node[midway,right] {Definition~\ref{def:light_cone_area} \\ Remark~\ref{rem:projector_and_light_cone} \\ Remark~\ref{rem:feedforward_and_light_cone}}
        node[midway,left]{Projectors\\ \& \\information\\propagation\\ light cone};

        \draw[->] (lemma_4.south) -- ($ (G_Q.north east) + (-1.725cm, 0) $)
        node[midway, fill=white, text=black]{Row-wise sparsity \\$\zeta_i = \mathcal{O}(1)$};

        \draw[->] (condition_3.south) -- ($ (G_C.north west) + (1.725cm, 0) $);

        \draw[->] (condition_4.south) -- (lemma_5.north)
        node[midway,fill=white, text=black]{$\iota_k = |\operatorname{supp}(\gamma_k)|$\\if $\theta_j \notin \operatorname{supp}(\gamma_k)$, $\mathrm{g}^{C_k}_{i,j}=0,\forall i$};

        \draw[->] (condition_5.south) -- (lemma_6.north)
        node[midway,fill=white, text=black]{$\nu_j= \mathcal{O}(1)$};

        \draw[->] (lemma_5.south) -- (G_C.north)
        node[midway, fill=white, text=black]{Row-wise sparsity \\$\iota_k = \mathcal{O}(1)$};

        \draw[->] (lemma_6.south) -- ($ (G_C.north east) + (-1.725cm, 0) $)
        node[midway, fill=white, text=black]{Column-wise sparsity \\$2^{\nu_j} = \mathcal{O}(1)$};

        \draw[->] (lemma_7.west) -- (result.east)
        node[midway,above]{Anti-concentration}
        node[midway, below]{inequality};

        \draw[->] (lemma_1.east) -- (result.west)
        node[midway,above]{Inner product}
        node[midway, below]{structure};

        \draw[->] (G_Q.south) -- (result.north);
        \draw[->] (G_C.south) -- (result.north);

        \node[text centered, anchor = south, fill=white, text=black]
        (dim) at ($ (result.north) + (0, 0.4cm) $)
        {$\mathcal{D} \leq \mathcal{O}(\iota_k \cdot (\max_i(\zeta_i)+\max_j(2^{\nu_j})) = \mathcal{O}(1)$};
	\end{tikzpicture}
	\caption{The flowchart of the nonvanishing gradients analysis (proof of Theorem~1).}
    \label{fig:proof}
\end{figure}

\subsection{Locality conditions and backward light cone}

Conditions that we introduce in Theorem~1 can be understood as various forms of locality in the system. Here, we first revisit the conditions in the main text one by one and intuitively explain their relation to trainability.

The condition $\mathbf{\mathcal{A}1}$ in Theorem~1 set each term in $\hat{H}$ only affect at most $\kappa=\mathcal{O}(1)$ local qubits. If $\hat{H} = \hat{H}_1+\hat{H}_2+\dots +\hat{H}_M$, $\kappa \coloneq \max(|\operatorname{supp}(\hat{H}_i)|) = \mathcal{O}(1)$.

The condition $\mathbf{\mathcal{A}2}$ in states that the LOCC circuit has constant depth $d=\mathcal{O}(1)$. Here, the depth of the circuit follows Definition~\ref{def:depth}. The condition $\mathbf{\mathcal{A}3}$ ensures the classical gradients do not vanish. The last two conditions $\mathbf{\mathcal{A}4}$ and $\mathbf{\mathcal{A}5}$ are less intuitive at first encounter. It means that for all $l$ outputs of the LOCC protocol function $g$, each parameter $\gamma$ at most affects $\iota = \mathcal{O}(1)$ of them. Meanwhile, each of the $l$ outputs is only controlled by $\nu=\mathcal{O}(1)$ input measurement results.

These conditions are related to the light cone perspective of how information propagates through quantum circuits, which is defined as follows:

\begin{definition}[Information propagation light cone in quantum circuit]
\label{def:light_cone}
    In a quantum circuit, the information propagation light cone (or the backward light cone) $\mathcal{L}$ with respect to an observable $O$ is defined as the set of gates whose change is possible to make a difference on the expectation value of $O$ based on the geometric structure of the circuit. From the dynamical perspective, one can view the information propagation light cone as the set of gates included in the backward evolution of $O$ in the Heisenberg picture, as depicted in Fig.~\ref{fig: Backward light cone}.

    To be more specific, consider a quantum circuit $U$ with input $\ket{\psi_i}$ and output $\ket{\psi_o}$, i.e. $\ket{\psi_o} = U\ket{\psi_i}$. The expectation of the $O$ is $\expval{O} = \bra{\psi_o}O\ket{\psi_o} = \bra{\psi_i}U^{\dagger}OU\ket{\psi_i}$, which can be viewed as taking the expectation value of the observable $\Tilde{O} = U^{\dagger} O U$ on the input state $\ket{\psi_i}$. Then the information propagation light cone $\mathcal{L}_{U,O}$ is the irreducible set of gates contributing to $\Tilde{O}$, with respect to circuit $U$ and observable $O$.
\end{definition}

\begin{figure}[htpb!]
    \centering
    \includegraphics[width=0.6\linewidth]{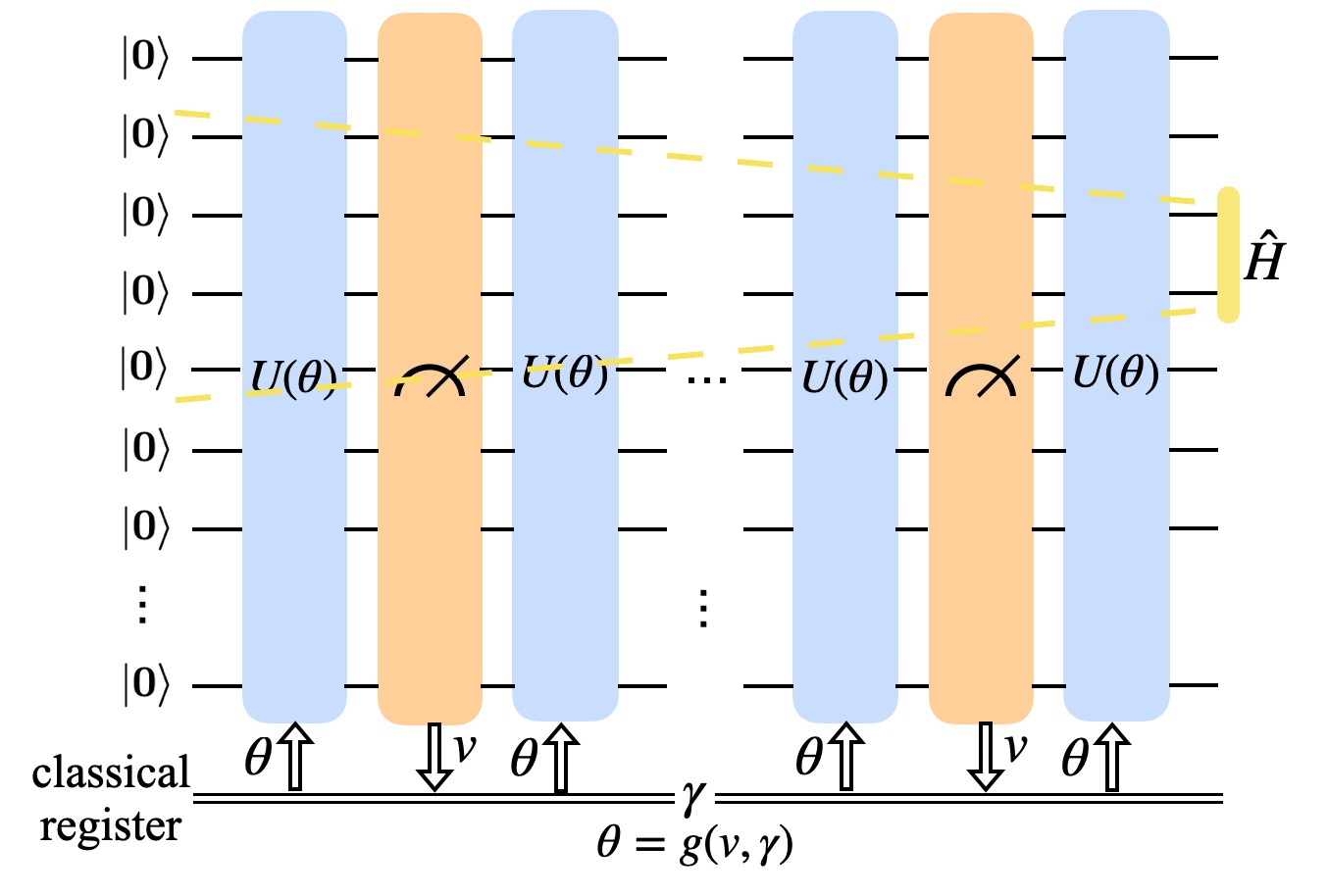}
    \caption{Backward light cone of a local Hamiltonian. The dashed yellow line represents the backward light cone of a local Hamiltonian $\hat{H}$, with the support of $\hat{H}$ represented by the vertical yellow bar, containing a constant number of qubits.}
    \label{fig: Backward light cone}
\end{figure}

\begin{definition}[Set of qubits in the light cone]
\label{def:qubits_in_light_cone}
    The set of qubits $\mathcal{Q}$ is in the information propagation light cone $\mathcal{L}_{U,O}$ with respect to a quantum circuit $U$ and the observable $O$ if for any qubit $q\in\mathcal{Q}$, there exist a gate $g\in\mathcal{L}_{U,O}$ such that $q$ is in the support of $g$.
\end{definition}

\begin{definition}[Area covered by the light cone]
\label{def:light_cone_area}
    For a geometrically local observable, The area covered by the information propagation light cone in a quantum circuit is the set of all elements (gates, projectors, etc.) in the circuit included in the geometric convex hull of the information propagation light cone as defined in Definition~\ref{def:light_cone}. For a general observable $O$, it is always possible to write it as the product of $m\leq n$ geometrically local observables $\{O_j\}_{j=1}^m$, i.e. $O = O_1O_2\ldots O_m$, then the area covered by the information propagation light cone of observable $O$ is the union of all areas covered by the $m$ information propagation light cones of observables $O_j$.
\end{definition}

We leave the following remarks concerning these definitions:

\begin{remark}
\label{rem:geo_local_qubits_in_light_cone}
    For geometrically local quantum circuits $U$ of depth $d$, the size of any set of qubits in the information propagation light cone $\mathcal{L}_{U,O}$ of observable $O$ has $\mathcal{O}(d)$ scaling with the circuit depth.
\end{remark}

\begin{remark}
\label{rem:projector_and_light_cone}
    Projectors in a quantum circuit will not enlarge the information propagation light cone since it will not spread information.
\end{remark}

\begin{remark}
\label{rem:feedforward_and_light_cone}
    When we are considering the parameter of a gate and taking \emph{partial} derivatives over them to calculate the gradient in Proposition~1, the projectors and feed-forward controls will not enlarge the information propagation light cone since they are all fixed by the definition of taking partial derivatives.
\end{remark}

With the conditions in Theorem~1, we can prove sparse structures of quantum and classical gradient matrices with light cone arguments. These sparsity structures will lead to proving the number of free parameters $\mathcal{D}=\mathcal{O}(1)$ and further overcoming the tendency of having vanishing values induced by the inner product structure when computing the gradients in Eq.\eqref{eq:Hadamard product form of the gradients}, eventually proving a lower bound for the gradient of LOCC-VQE to be a constant independent of $n$.

\subsection{Proof of Theorem~1}

As illustrated in Fig.~\ref{fig:proof}, our proof consists of the following lemmas.

\begin{lemma}
\label{lemma: entries of G^Q is not exponentially small}
Under conditions in Theorem~1, each entry of $\mathbf{G}^{Q_k}$ does not decay with $n$ in the asymptotic limit.
\end{lemma}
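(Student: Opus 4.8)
The plan is to show that each entry $\mathrm{g}^{Q_k}_{i,j}$, being the parameter-shift derivative of the expectation $\Tr[\hat H \tilde\Phi_{\theta,\mbv_i}]$ with respect to a single Pauli rotation angle $\theta_j$, can be written as a difference of two post-selected expectation values weighted by the outcome probability, and that this quantity is bounded below (in the cases that matter) by a constant independent of $n$. First I would invoke Lemma~\ref{lem:grad} to write $\mathrm{g}^{Q_k}_{i,j} = \tfrac12\big(\Tr[\hat H\,\tilde\Phi_{\theta+\frac\pi2\mbe_j,\mbv_i}] - \Tr[\hat H\,\tilde\Phi_{\theta-\frac\pi2\mbe_j,\mbv_i}]\big)$, so that each term is of the form $\Tr[\hat H\, U^{(d)}\Pi^{(d-1)}\cdots\Pi^{(1)}U^{(1)}\ketbra{\Psi_0}{\Psi_0}(\cdots)^\dagger]$. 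The key structural move is the Heisenberg/backward-light-cone argument already set up in the excerpt: because $\hat H$ has constant support $\kappa=\mathcal O(1)$ (condition $\mathbf{\mathcal A1}$) and the circuit has constant depth $d=\mathcal O(1)$ (condition $\mathbf{\mathcal A2}$), conjugating $\hat H$ back through the $\mathcal O(d)$ unitary layers yields an operator supported on only $\mathcal O(\kappa\cdot 2^d)=\mathcal O(1)$ qubits — the backward light cone of Fig.~\ref{fig: Backward light cone}. The projectors $\Pi^{(i)}_{\mbv_i}$ only restrict outcomes and, after tracing, the derivative depends only on the reduced state of the initial state on this constant-size light-cone region and on the finitely many gates touching it.

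Concretely, the steps I would carry out are: (i) use Lemma~\ref{lem:grad} to reduce $\mathrm{g}^{Q_k}_{i,j}$ to a difference of two traces; (ii) push $\hat H$ through the final layers $U^{(d)},\Pi^{(d-1)},\dots$ in the Heisenberg picture, noting the projectors $\Pi^{(i)}_{\mbv_i}$ commute out as classical conditioning so the surviving operator $\tilde{\hat H}_{\mbv_i}$ acts nontrivially on at most $c(d,\kappa)=\mathcal O(1)$ qubits; (iii) observe that the parameter-shift commutator structure from Lemma~\ref{lem:grad} means the derivative equals an expectation of a bounded operator (norm $\le\|\hat H\|$, with $\|\hat H\|$ the norm of a single local term, hence $\mathcal O(1)$) on a constant-size marginal of the (possibly post-selected) intermediate state; (iv) conclude that this quantity is a function of only $\mathcal O(1)$ parameters and matrix elements, so it cannot be forced to decay exponentially (or at all) as $n\to\infty$ — it is bounded by a constant determined purely by the local structure, not by $n$. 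I would state the lemma's conclusion as: there exists a constant $C=C(d,\kappa)>0$, independent of $n$, such that $|\mathrm{g}^{Q_k}_{i,j}|$ is generically of order $C$ (and in particular does not scale as $e^{-\Omega(n)}$), because the entry is computed entirely within a region of $\mathcal O(1)$ qubits.

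The main obstacle I anticipate is making precise the sense in which the entries "do not decay" — a single entry could accidentally vanish for a particular parameter choice, just as in ordinary VQE, so the correct statement is about the \emph{typical} magnitude or about the absence of an \emph{exponential} suppression forced by system size. I would handle this the same way the surrounding text does: frame it as a concentration/typicality statement (treating the relevant constant-dimensional marginal as effectively Haar-random or generic, as flagged in the "Notations and interpretations" subsection where vectorized gradients are modeled as uniform on a sphere in $\mathbb R^{\mathcal D}$ with $\mathcal D=\mathcal O(1)$), so that the variance of $\mathrm{g}^{Q_k}_{i,j}$ over parameters is $\Omega(1/\mathrm{poly}(\text{const}))=\Omega(1)$ rather than $e^{-\Omega(n)}$. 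The secondary subtlety is bookkeeping the normalization: since $\tilde\Phi$ is unnormalized, $\Tr[\hat H\tilde\Phi_{\theta,\mbv_i}] = P_\theta(\mbv_i)\Tr[\hat H\Phi_{\theta,\mbv_i}]$ carries the outcome probability, and I would note that the light-cone reduction applies to this product directly so no separate control of $P_\theta(\mbv_i)$ is needed for the entry to be well-defined and non-exponentially-small as an aggregate contribution. Beyond these framing issues the argument is a routine application of the constant-light-cone principle already invoked in the excerpt.
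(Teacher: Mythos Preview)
Your proposal is correct and follows essentially the same approach as the paper: invoke the parameter-shift formula (Lemma~\ref{lem:grad}) and then use conditions $\mathcal{A}1$ and $\mathcal{A}2$ together with the backward-light-cone argument to conclude that each entry depends only on a constant number of qubits and hence cannot decay with $n$. If anything, you are considerably more careful than the paper's own proof, which is a three-sentence sketch that does not address the normalization of $\tilde\Phi$ or the ``typical vs.\ accidental zero'' subtlety you flag; those refinements are sound but go beyond what the paper actually writes.
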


\begin{proof}
    In Lemma~\ref{lem:grad}, it has been proved that the quantum gradient can still be calculated with parameter shifts. With condition $\mathbf{\mathcal{A}1}$ and $\mathbf{\mathcal{A}2}$, the circuit depth is constant, and the observable is local.
    Consider each entry $\mathbf{G}^{Q_k}_{ij}$ as a function of $n$, and the lemma can be interpreted as all these functions do not decay with $n$ in the asymptotic limit. To see this, notice that the gradient is calculated based on the local observable with constant support, thus the information propagation light cone of this observable, as defined in Definition~\ref{def:light_cone}, and the size of the qubits in the information propagation light cone, as defined in Definition~\ref{def:qubits_in_light_cone}, are both constant since the circuit depth is constant. Thus, tracing out the subsystem not in the information propagation light cone, resulting in a system of \emph{constant} size, will not change the output of the aforementioned functions. This indicates that the results of these aforementioned functions do not depend on the system size. As a remark, for the entries related to gates out of the information propagation light cone, they will always be zero, which naturally will not decay with the system size in the asymptotic limit.
\end{proof}

\begin{lemma}
\label{lem:constant projector}
    Under conditions in Theorem~1, the number of projectors in the area covered by the information propagation light cone of $\hat{H}$ is constant.
\end{lemma}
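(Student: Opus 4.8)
\textit{Proof proposal for Lemma~\ref{lem:constant projector}---}
The plan is to run a light-cone argument in the Heisenberg picture, analogous to the one already invoked for Lemma~\ref{lemma: entries of G^Q is not exponentially small}, but tracking \emph{which measurement projectors} the observable is sensitive to rather than which qubits. First I would make precise what ``backward light cone of $\hat{H}$'' means in the presence of projectors: propagating $\hat{H}$ (assumed to be a single term of constant support $\kappa$ by condition $\mathbf{\mathcal{A}1}$) backwards through the layered decomposition of Eq.~\eqref{eq:adapt}, each of the $d$ unitary layers $U^{(i)}_\mbv$ — which by Definition~\ref{def:depth} is a single layer of non-overlapping two-qubit gates — can at most double the size of the support, so after passing through all $d$ unitary layers the support grows to at most $\kappa \cdot 2^{d}$ qubits. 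Interleaved among these unitary layers are the measurement layers $\Pi^{(j)}_\mbv$; a given projector $\Pi^{(j)}_\mbv$ lies in the backward light cone of $\hat{H}$ precisely when its support overlaps the (backward-propagated) support of $\hat{H}$ at that stage of the circuit.

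The key steps, in order, are: (i) fix the layered form $U^{(d)}_\mbv \Pi^{(d-1)}_\mbv \cdots \Pi^{(1)}_\mbv U^{(1)}_\mbv$ and define $S_t$ to be the support of $\hat{H}$ after being conjugated back through the top $t$ layers (unitary or measurement); (ii) observe that a measurement layer does not enlarge $S_t$ at all — it only determines whether that layer's projectors are ``seen'' — while each two-qubit-gate unitary layer satisfies $|S_{t+1}| \le 2|S_t|$; (iii) conclude $|S_t| \le \kappa\,2^{d}$ for all $t$, hence by condition $\mathbf{\mathcal{A}2}$ ($d = \mathcal{O}(1)$) the propagated support never exceeds a constant $R := \kappa\,2^{d} = \mathcal{O}(1)$; (iv) since each projector acts on $\mathcal{O}(1)$ qubits and within a measurement layer the projectors have disjoint (or boundedly overlapping) support, at most $\mathcal{O}(R)$ projectors per layer can overlap $S_t$, and there are only $d-1 = \mathcal{O}(1)$ measurement layers, so the total number of projectors in the backward light cone is $\mathcal{O}(d\cdot R) = \mathcal{O}(1)$, independent of $n$.

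The one subtlety I would be careful about — and what I expect to be the main obstacle in writing this cleanly — is that in an adaptive circuit the later unitaries $U^{(i)}_\mbv$ depend on earlier outcomes $\mbv$, so the backward-propagated support $S_t$ could in principle depend on the branch $\mbv$; the fix is to take the union over all branches, which is still bounded because the \emph{geometric} locality and depth bounds hold branch-by-branch (the causal structure of which gates touch which qubits is fixed, only the gate content varies). A second minor point is handling boundedly-overlapping rather than strictly disjoint projector supports within a layer, which only changes constants. Once these are dispatched, the statement follows; this constancy is exactly what feeds into bounding the number of distinct values per row of $\mathbf{G}^{Q_k}$ and ultimately the degree of freedom $\mathcal{D} = \mathcal{O}(1)$ in the inner product.
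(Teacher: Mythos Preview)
Your proposal is correct and reaches the same conclusion as the paper, but the route is genuinely different in one key respect. The paper's proof assumes a one-dimensional geometrically local layout and tracks the backward light cone as a contiguous interval $\mathbf{q}[i-(d-j),\,i+\kappa-1+(d-j)]$, so the support grows \emph{linearly} in depth (by at most two qubits per layer), giving a projector count bounded by $\sum_{j=1}^d (\kappa + 2(d-j)) = \mathcal{O}(\kappa d + d^2)$. You instead use the purely combinatorial doubling bound $|S_{t+1}|\le 2|S_t|$, which holds for any layer of non-overlapping two-qubit gates regardless of geometry, and obtain the coarser but still constant bound $\mathcal{O}(d\,\kappa\,2^d)$. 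Your argument therefore does not need the 1D nearest-neighbor assumption that the paper silently invokes, and it applies verbatim to arbitrary connectivity; the price is a looser constant, which is irrelevant here since $d=\mathcal{O}(1)$.

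You are also more careful than the paper on two points it glosses over: the dependence of $U^{(i)}_\mbv$ on the branch $\mbv$ (which you handle by noting the gate \emph{locations} are branch-independent, so the union over branches preserves the bound), and the observation that measurement layers do not enlarge the propagated support. Both are needed for the argument to be airtight in the adaptive setting, and the paper's proof does not mention them.
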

\begin{proof}
    The proof is based on condition $\mathbf{\mathcal{A}1}$ and $\mathbf{\mathcal{A}2}$ with the light cone argument as illustrated in Fig.~\ref{fig: Backward light cone}. Denote the support of the observable $\hat{H}$ as $S\coloneq\operatorname{supp}(\hat{H})$. Condition $\mathbf{\mathcal{A}1}$ states that $\kappa = |S| = \mathcal{O}(1)$ and $S$ follows the locality constraint. Denote the backward light cone of $S$ as $\mathcal{L}_{S}$.

    First, we consider the case where $S$ is geometrically local to simplify notations. In that case, the lemma is a straightforward consequence for a constant depth circuit with a local observable $\hat{H}$.

    In general, if $S$ is not geometrically local, $S$ will still be the union of a constant number of geometrically local sets, with each set containing only a constant number of elements due to the constant size property of $S$. For each of these local sets, we can use the above arguments to prove a constant number of projectors. When added up, the number of projectors in the information propagation light cone $\mathcal{L}_{S}$ is constant.

\end{proof}

\begin{lemma}
\label{lem:sparsity of G^Q}
    Under conditions in Theorem~1, the number of different values in each row of $\mathbf{G}^{Q_k}$ is constant.
\end{lemma}
\begin{proof}
    For each quantum parameter $\theta_i$, corresponding to the $i$th row $\mathbf{G}^{Q_k}$, we are doing partial derivative of $\partial \Tr \left[\hat{H} \Psi_{\gamma}\right]$ over $\theta_i$. This means that we need to fix all other quantum parameters during the calculation. Consider the entries in the $i$th row of $\mathbf{G}^{Q_k}$, each entry corresponds to the gradient value when obtaining a possible measurement result. However, with all other parameters fixed, only the measurement outcomes in the backward light cone can affect the expectation value of $\hat{H}$. In other words, only the different measurement outcomes within the backward light cone of $\hat{H}$ can contribute to having different values in the $i$th row of $\mathbf{G}^{Q_k}$.

    Denote the $i^{th}$ row of $\mathbf{G}^{Q_k}$ as $\mathbf{g}^{Q_k}_i$, and let $\tilde{g}^{Q_k}_i$ be the set of all different values in $\mathbf{g}^{Q_k}_i$. Denote the size of $\tilde{g}^{Q_k}_i$ as $\zeta_i \coloneqq |\tilde{g}^{Q_k}_i|$.

    With lemma~\ref{lem:constant projector}, only a constant number of projectors affect the expectation value of the observable $\hat{H}$. Denote the number of projectors that affect the expectation value of the observable $\hat{H}$ as $\chi = \mathcal{O}(1)$. Then, for any row $i$ of $\mathbf{G}^{Q_k}$, $\zeta_i$ is upper bounded by $2^{\chi}$. Thus, we have
    \begin{equation}
        \zeta_i \leq 2^{\chi} = \mathcal{O}(1), \forall i\in[l],
    \end{equation}
    which completes the proof.
\end{proof}

\begin{lemma}
\label{lem:row sparsity of G^C}
    Under conditions in Theorem~1, the number of non-zero rows of $\mathbf{G}^{C_k}$ is constant.
\end{lemma}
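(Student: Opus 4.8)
The plan is to show that condition $\mathbf{\mathcal{A}4}$ directly bounds the number of non-zero rows of $\mathbf{G}^{C_k}$ by a constant. Recall that the $(i,j)$ entry of $\mathbf{G}^{C_k}$ is $\partial g_j(\gamma, \mbv_i) / \partial \gamma_k$; here the row index $i$ enumerates measurement outcomes $\mbv_i$ and the column index $j$ enumerates the quantum gate parameters $\theta_j$. Wait — I should be careful about which index is the ``row.'' Following the matrix convention set just above in the appendix, $\mathbf{G}^{C_k} \in \mathbb{R}^{l \times 2^m}$ has rows indexed by the $l$ gate parameters $\theta$ and columns indexed by the $2^m$ possible measurement strings $\mbv$. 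So the $i$-th row is $\{\partial g_i(\gamma,\mbv)/\partial\gamma_k\}_{\mbv}$, the sensitivity of the single gate parameter $\theta_i$ to $\gamma_k$ as $\mbv$ varies. A row is identically zero unless $\gamma_k$ actually influences the output $g_i$.

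First I would invoke condition $\mathbf{\mathcal{A}4}$: the function $g$ has constant support with respect to each $\gamma_j$, i.e., each LOCC parameter $\gamma_k$ feeds into only $\iota = \mathcal{O}(1)$ of the $l$ output components $g_1, \dots, g_l$. Formally, define the support set $T_k \coloneqq \{\, i \in [l] : \partial g_i(\gamma, \mbv)/\partial\gamma_k \not\equiv 0 \text{ as a function of } \mbv \,\}$; condition $\mathbf{\mathcal{A}4}$ says $|T_k| \le \iota = \mathcal{O}(1)$. Then for any $i \notin T_k$, the entire $i$-th row of $\mathbf{G}^{C_k}$ vanishes: $\mathrm{g}^{C_k}_{i,j} = \partial g_i(\gamma,\mbv_j)/\partial\gamma_k = 0$ for all $j$. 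Hence the number of non-zero rows of $\mathbf{G}^{C_k}$ is at most $|T_k| \le \iota = \mathcal{O}(1)$, which is the claim. I would then note this is one of the two sparsity facts about $\mathbf{G}^{C_k}$ needed downstream — the complementary fact, that each non-zero row has only a constant number of distinct values, comes from condition $\mathbf{\mathcal{A}5}$ and would be stated as the next lemma.

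There is essentially no hard analytic content here; the lemma is a bookkeeping consequence of the definitions plus condition $\mathbf{\mathcal{A}4}$. The only thing requiring a little care is making precise the phrase ``$g$ has a constant support regarding each $\gamma_j$'' — I would spell out that this means the partial derivative $\partial g / \partial \gamma_k$, viewed as a vector-valued function, has at most $\iota$ non-identically-zero components, and that ``identically zero'' is over the domain of measurement strings $\mbv$ (so that it is the row, not merely individual entries, that vanishes). The main (mild) obstacle is notational: keeping the row/column orientation of $\mathbf{G}^{C_k}$ consistent with the $\mathbb{R}^{l\times 2^m}$ convention and with how the later inner-product argument pairs the non-zero rows of $\mathbf{G}^{C_k}$ against the low-rank (few-distinct-value) rows of $\mathbf{G}^{Q_k}$ to conclude $\mathcal{D} = \mathcal{O}(1)$. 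I would close by remarking that combining this lemma with Lemma~\ref{lem:sparsity of G^Q} and condition $\mathbf{\mathcal{A}5}$ restricts the effective degrees of freedom entering the Frobenius inner product in Eq.~\eqref{eq:Hadamard product form of the gradients} to a constant, which is the key step toward the final gradient lower bound.
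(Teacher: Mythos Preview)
Your proposal is correct and follows essentially the same argument as the paper: define the support of $\gamma_k$ as the set of output indices $i$ for which $g_i$ depends on $\gamma_k$, invoke condition $\mathcal{A}4$ to bound its size by $\iota_k = \mathcal{O}(1)$, and conclude that all rows outside this support vanish identically. Your handling of the row/column indexing is in fact more careful than the paper's own (which has a minor inconsistency between the declared $l\times 2^m$ shape and the $(i,j)$ subscript convention), and your closing remarks correctly anticipate how this lemma feeds into the degree-of-freedom bound.
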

\begin{proof}
    Denote the support of the parameter $\gamma_k$ as $\operatorname{supp}(\gamma_k)$:
    \begin{equation}
        \operatorname{supp}(\gamma_k) = \{\theta_{\gamma_k}^1, \{\theta_{\gamma_k}^2, \dots,\{\theta_{\gamma_k}^{\iota}\}
    \end{equation}
    which is a subset of the outputs $\theta$ of the LOCC protocol function $g$. The size of $\operatorname{supp}(\gamma_k)$ is $\iota_k = |\operatorname{supp}(\gamma_k)| = \mathcal{O}(1)$. For each output $\theta_j = g_j(\gamma, \mbv_i)$, if $\theta_j \notin \operatorname{supp}(\gamma_k)$, we have:
    \begin{equation*}
        \mathrm{g}^{C_k}_{i,j} = \frac{\partial \theta_j}{\partial \gamma_k} = \frac{\partial g_j(\gamma, \mbv_i)}{\partial \gamma_k} = 0, \forall i\in[2^m].
    \end{equation*}
    Thus, all rows of $\mathbf{G}^{C_k}$ that do not correspond to the support of $\gamma_k$ are zero-valued so that only, at most, $\iota_k = |\operatorname{supp}(\gamma_k)| = \mathcal{O}(1)$ rows of $\mathbf{G}^{C_k}$ are non-zero.
\end{proof}

\begin{lemma}
\label{lem: column sparsity of G^C}
    Under conditions in Theorem~1, the number of different values in each row of $\mathbf{G}^{C_k}$ is constant.
\end{lemma}
\begin{proof}
    With condition $\mathbf{\mathcal{A}5}$, the value of $g_j$ is only relevant to $\nu_j=\mathcal{O}(1)$ measurement results. So, the number of different values among the entries in the $j$th row of $\mathbf{G}^{C_k}$ is at most $2^{\nu_j}=\mathcal{O}(1)$ for all $j$.
\end{proof}

\begin{lemma}
\label{lem: high-dimensional inner product}
    For two vectors $\mathbf{a},\mathbf{b}\in \mathbb{R}^D$, such that $\|\mathbf{a}\| = \|\mathbf{b}\| = 1$ uniformly sampled at random, and for any $\eta\leq 0.1$, the probability that the inner product of these two vector larger than $\eta$ is lower bounded, i.e. $\mathbb{P}(\langle \mathbf{a}, \mathbf{b} \rangle \geq \eta) \geq \Omega(e^{-(D-1)\eta} - 0.2^{\frac{D-1}{2}})$.
\end{lemma}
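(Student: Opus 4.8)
\textbf{Proof proposal for Lemma~\ref{lem: high-dimensional inner product}.}

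The plan is to reduce the statement to a standard geometric fact about the uniform measure on the sphere $S^{D-1}$: the measure of a spherical cap of angular radius $\varphi$ is proportional to $\int_0^{\varphi} \sin^{D-2}(t)\,dt$, and equivalently, if $\mathbf{a}$ is fixed and $\mathbf{b}$ is Haar-random, the first coordinate $t = \langle \mathbf{a}, \mathbf{b}\rangle$ has density proportional to $(1-t^2)^{(D-3)/2}$ on $[-1,1]$. By rotational invariance I may fix $\mathbf{a} = \mathbf{e}_1$, so the problem becomes lower-bounding $\mathbb{P}(t \geq \eta)$ where $t$ has this Beta-type density. First I would write $\mathbb{P}(t \geq \eta) = Z^{-1}\int_{\eta}^{1}(1-t^2)^{(D-3)/2}\,dt$ with normalization $Z = \int_{-1}^{1}(1-t^2)^{(D-3)/2}\,dt$.

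The two terms in the claimed bound suggest splitting the estimate: $e^{-(D-1)\eta}$ should come from a lower bound on the numerator near $t = \eta$, and the subtracted $0.2^{(D-1)/2}$ should come from an upper bound on the normalization $Z$ (or more precisely, from controlling the ratio). For the numerator, I would restrict the integral to $[\eta, \eta + c/D]$ for a suitable small constant $c$; on this interval $(1-t^2)^{(D-3)/2} \geq (1 - (\eta+c/D)^2)^{(D-3)/2}$, and using $1 - x^2 \geq e^{-2x^2 - O(x^4)} \geq e^{-2.2\eta^2 - \dots}$ together with $\eta \leq 0.1$ one gets a factor like $e^{-O(D\eta^2)}$; since $\eta \leq 0.1$, $D\eta^2 \leq 0.1 \cdot D\eta$, so this is dominated by $e^{-(D-1)\eta}$ up to constants and a factor $1/D$ from the interval length — actually I would more carefully use $(1-t^2)^{(D-3)/2}\ge (1-t)^{(D-3)/2}(1+t)^{(D-3)/2}$ and bound $(1-t)^{(D-3)/2}\ge e^{-(D-3)t/(1-\eta)}\ge e^{-(D-1)\eta}$-type inequalities for $t$ near $\eta$. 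For the normalization, the classical bound $Z \leq C/\sqrt{D}$ handles the polynomial factors, but to produce the clean $0.2^{(D-1)/2}$ I would instead observe that it suffices to show $\mathbb{P}(t \geq \eta) \ge \mathbb{P}(t\ge \eta) - \mathbb{P}(|t| \text{ small-ish})$ type rearrangements are unnecessary; rather, bound the ``bad'' contribution where the cap estimate fails by noting $\int_{\eta}^1 (1-t^2)^{(D-3)/2}dt$ minus its dominant piece is at most $(1-\eta^2)^{(D-1)/2}/\text{const} \le 0.99^{(D-1)/2}$, and absorb constants; the exponent base $0.2$ has slack so any base $<1$ works after adjusting the implied constant in $\Omega(\cdot)$.

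The main obstacle I anticipate is not any single estimate but getting the \emph{two competing exponential scales} to cooperate: the positive term decays like $e^{-(D-1)\eta}$ (slow, since $\eta$ is small) while the correction $0.2^{(D-1)/2} = e^{-(D-1)\ln 5 / 2}$ decays much faster in $D$, so for the bound to be a genuine lower bound rather than vacuous I must ensure the constant hidden in $\Omega$ is chosen \emph{after} both terms, and that the numerator estimate really does beat the fast-decaying correction for all $D \ge 2$ and all $\eta \le 0.1$ simultaneously (including the boundary case $D = 2$, where the density is $(1-t^2)^{-1/2}$ and the integral is an arcsine — worth checking by hand). A clean way to sidestep delicacy is to prove the slightly stronger statement $\mathbb{P}(t\ge\eta)\ge c_1 e^{-(D-1)\eta}$ for an absolute constant $c_1$ whenever $D$ is below a threshold $D_0(\eta)$, and $\mathbb{P}(t\ge\eta)\ge \tfrac12\mathbb{P}(t\ge 0) - 0.2^{(D-1)/2}\ge \text{const} - 0.2^{(D-1)/2}$ for $D\ge D_0$, then check the union of the two regimes covers all $D$; but I expect the direct single-integral argument above, with careful constant bookkeeping, to be the cleanest route, and I would present it that way.
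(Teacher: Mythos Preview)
Your reduction by rotational invariance to the distribution of the first coordinate of a uniform point on $S^{D-1}$ is exactly how the paper begins, and your density $(1-t^2)^{(D-3)/2}$ is in fact the correct one for the sphere (the paper slices the solid ball instead and so works with exponent $(D-1)/2$; for an $\Omega(\cdot)$ statement this discrepancy is harmless, and it incidentally sidesteps the $D=2$ singularity you worried about).

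Where your proposal diverges from the paper is in the estimation of $\int_\eta^1 (1-h^2)^{\alpha}\,dh$. You treat the two terms $e^{-(D-1)\eta}$ and $0.2^{(D-1)/2}$ as coming from separate mechanisms --- a local lower bound on the integrand near $t=\eta$ for the first, and some tail or normalization correction for the second --- and then worry about reconciling the two regimes. The paper instead obtains both terms in one stroke: it uses the single pointwise inequality
\[
1-h^2 \;\ge\; e^{-2h}\qquad\text{for }h\in(0,h^*),
\]
where $h^*\in(0.9,1)$ is the unique crossover point with $1-(h^*)^2=e^{-2h^*}$. Raising to the power $(D-1)/2$ and integrating the exponential \emph{exactly} over $[\eta,h^*]$ gives
\[
\int_\eta^{h^*}(1-h^2)^{(D-1)/2}\,dh\;\ge\;\frac{1}{D-1}\Bigl(e^{-(D-1)\eta}-e^{-(D-1)h^*}\Bigr),
\]
and the subtracted term is then rewritten as $e^{-(D-1)h^*}=(1-(h^*)^2)^{(D-1)/2}\le 0.2^{(D-1)/2}$ because $h^*>0.9$. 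So the $0.2^{(D-1)/2}$ is not a separate correction at all: it is simply the upper-endpoint evaluation of the exponential antiderivative. The remaining $1/(D-1)$ prefactor is absorbed by matching Gamma-function factors against the total ball volume.

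Your suggested local estimates (restricting to $[\eta,\eta+c/D]$, or using $(1-t)^{\alpha}\ge e^{-\alpha t/(1-t)}$) are not wrong in spirit, but none of them produces the exact form $e^{-(D-1)\eta}-0.2^{(D-1)/2}$ without further contortions, and the first one in particular yields only a $\Theta(1/\sqrt{D})$-type bound after normalization, which does not match the claimed dependence on $\eta$. The missing idea is precisely the global exponential minorant $1-h^2\ge e^{-2h}$ on a long interval, which makes the whole computation a two-line exact integral.
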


\begin{proof}
    Consider a $D$-dimensional unit sphere $\mathcal{B}^D\in\mathbb{R}^D$ with its center at the origin. We can restate this lamma as sampling two vectors $\mathbf{a}$ and $\mathbf{b}$ on $\mathcal{B}^D$ uniformly at random, and the probability that their inner product is larger than $\eta$ is lower bounded by $\Omega(e^{-(D-1)\eta} - 0.2^{\frac{D-1}{2}})$. Without loss of generality, we can set $\mathbf{a} = \mathbf{e_1}$ which is a unit vector with only the first dimension non-zero, and we call its endpoint the north pole of $\mathcal{B}^D$.

    Since the volume of a $D$ dimensional sphere with radius $R$ is
    \begin{equation}
        V_D(R) = \frac{\pi^{D / 2}R^D}{\Gamma(1+D / 2)},
    \end{equation}
    and the volume of $\mathcal{B}^D$ is
    \begin{equation}
        V(\mathcal{B}^D) = \frac{\pi^{D / 2}}{\Gamma(1+D/2)}.
    \end{equation}
    Let the $D-1$ dimensional sphere $\mathcal{E}_{\mathcal{B}^D}\subseteq\mathbb{R}^D$ with $x_1=0$ be the equator of $\mathcal{B}^D$, and let $\mathcal{B}^D_\eta$ be the area such that
    \begin{equation}
    \begin{split}
        &\mathcal{B}^D_\eta \subseteq \mathcal{B}^D,\\
        &\forall p\in \mathcal{B}^D_\eta, q\in \mathcal{E}_{\mathcal{B}^D}, \min \mathfrak{d}(p,q) \geq \eta
    \end{split}
    \end{equation}
    where $\mathfrak{d}(p,q)$ is the Euclidean distance between two points in $\mathbb{R}^D$, and $\eta \leq 0.1$. Note that this guarantees that $\forall p\in \mathcal{B}^D_\eta$, the coefficient corresponding to the first dimension is at least $\eta$. Therefore, the probability of the inner product of $\mathbf{a}$ and $\mathbf{b}$ larger than $\eta$ is the ratio of the volume of $\mathcal{B}^D_\eta$ and $\mathcal{B}^D$.

    More explicitly, the volume of $\mathcal{B}^D_\eta$ is
    \begin{equation}
    \begin{split}
        V(\mathcal{B}^D_\eta) &= 2\int_\eta^1 V_{D-1}(\sqrt{1-h^2})dh\\
        &= \frac{2\pi^{\frac{D-1}{2}}}{\Gamma(1+\frac{D-1}{2})} \int_\eta^1 (1-h^2)^{\frac{D-1}{2}}dh.\\
    \end{split}
    \end{equation}
    To lower bound $V(\mathcal{B}^D_\eta)$, we introduce $h^*\in (0.9,1)$ such that $1-(h^*)^2 = e^{-2h^*}$. This guarantees that
    \begin{equation}
        1-h^2 \geq e^{-2h}, \forall h\in(0,h^*).
    \end{equation}
    So,
    \begin{equation}
        \begin{split}
            V(\mathcal{B}^D_\eta) &= \frac{2\pi^{\frac{D-1}{2}}}{\Gamma(1+\frac{D-1}{2})} \int_\eta^1 (1-h^2)^{\frac{D-1}{2}}dh\\
            &\geq \frac{2\pi^{\frac{D-1}{2}}}{\Gamma(1+\frac{D-1}{2})} \int_\eta^{h^*} (1-h^2)^{\frac{D-1}{2}}dh\\
            &\geq \frac{2\pi^{\frac{D-1}{2}}}{\Gamma(1+\frac{D-1}{2})} \int_\eta^{h^*} e^{-2h\cdot\frac{D-1}{2}}dh\\
            &=\frac{2\pi^{\frac{D-1}{2}}}{\Gamma(1+\frac{D-1}{2})} \int_\eta^{h^*} e^{-(D-1)h}dh\\
            &= \frac{2\pi^{\frac{D-1}{2}}}{\Gamma(1+\frac{D-1}{2})}\cdot \frac{1}{D-1}\cdot(e^{-(D-1)\eta} - e^{-(D-1)h^*})\\
            &= \frac{\pi^{\frac{D}{2}}}{\Gamma(1+\frac{D-1}{2})\cdot\frac{D-1}{2}}\cdot\frac{1}{\sqrt{\pi}}\cdot(e^{-(D-1)\eta} - (1-(h^*)^2)^{\frac{D-1}{2}})\\
            &\geq \frac{\pi^{\frac{D}{2}}}{\Gamma(1+\frac{D-1}{2})\cdot\frac{D+1}{2}}\cdot\frac{1}{\sqrt{\pi}}\cdot(e^{-(D-1)\eta} - (e^{-2h^*})^{\frac{D-1}{2}})\\
            &\geq \frac{\pi^{\frac{D}{2}}}{\Gamma(1+\frac{D}{2})}\cdot\frac{1}{\sqrt{\pi}}\cdot(e^{-(D-1)\eta} - (1-(h^*)^2)^{\frac{D-1}{2}})\\
            &= V(\mathcal{B}^D)\cdot\frac{1}{\sqrt{\pi}}(e^{-(D-1)\eta} - (1-(h^*)^2)^{\frac{D-1}{2}})\\
            &\geq V(\mathcal{B}^D)\cdot\frac{1}{\sqrt{\pi}}(e^{-(D-1)\eta} - 0.2^{\frac{D-1}{2}}),
        \end{split}
    \end{equation}
    where the last inequality is obtained by $h^*\geq 0.9$ and $(1-(h^*)^2)\leq 0.2$. Thus, we have proved that:
    \begin{equation}
        \frac{V(\mathcal{B}^D_\eta)}{V(\mathcal{B}^D)} \geq \frac{1}{\sqrt{\pi}}(e^{-(D-1)\eta} - 0.2^{\frac{D-1}{2}}).
    \end{equation}
    Since $\eta \leq 0.1$, the lower bound is larger than $0$ when $D > 1$, so it is not trivial. Thus, we have proved that
    \begin{equation}
        \mathbb{P}(\langle \mathbf{a}, \mathbf{b} \rangle \geq \eta) \geq \Omega(e^{-(D-1)\eta} - 0.2^{\frac{D-1}{2}}).
    \end{equation}
\end{proof}

\begin{remark}
    In Lemma~\ref{lem: high-dimensional inner product}, the dimension $D$ can be interpreted as the degree of freedom of the inner product of two vectors $\mathbf{a}$ and $\mathbf{b}$.
\end{remark}

Now, with these lemmas proven, we are ready to combine the above lemmas and prove that quantum gradients for variational LOCC-assisted circuits are free of barren plateaus.

\begin{proof}[Proof of Theorem~1]
    Recall Eq.~\eqref{eq:Hadamard product form of the gradients},
\begin{equation}
    \frac{\partial \Tr \left[\hat{H} \Psi_{\gamma}\right]}{\partial \gamma_k} = \langle\mathbf{G}^{C_k}, \mathbf{G}^{Q_k} \rangle_F.
\end{equation}
With condition $\mathbf{\mathcal{A}3}$ and Lemma~\ref{lemma: entries of G^Q is not exponentially small}, the entries in neither $\mathbf{G}^{C_k}$ nor $\mathbf{G}^{Q_k}$ are exponentially small as $n$ scales. We now focus on the inner product structure. The Frobenius inner product of two $l$-by-$2^m$ matrices can be interpreted as the inner product of two vectors with length $l\cdot 2^m$. However, Lemma~\ref{lem:row sparsity of G^C} indicates that only the entries corresponding to a constant number of non-zero number rows of $\mathbf{G}^{C_k}$ needs to be considered. Among these rows, Lemma~\ref{lem: column sparsity of G^C} and Lemma~\ref{lem:sparsity of G^Q} indicate that the number of different values in each row is a constant. Thus, the inner product of the two high-dimensional vectors only has a number of free parameters $\mathcal{D}$,
\begin{equation}
    \mathcal{D} \leq \mathcal{O}(\iota_k \cdot (\max_i(\zeta_i)+\max_j(2^{\nu_j})) = \mathcal{O}(1)
\end{equation}
where $\zeta_i$, $\iota_k$ and $\nu_j$ follow the definitions in Lemma~\ref{lem: column sparsity of G^C}, Lemma~\ref{lem:row sparsity of G^C}, and Lemma~\ref{lem:sparsity of G^Q}.

Consequently, with Lemma~\ref{lem: high-dimensional inner product}, where the dimension can be interpreted as the number of free parameters $\mathcal{D}$, the probability that $\frac{\partial \Tr \left[\hat{H} \Psi_{\gamma}\right]}{\partial \gamma_k}\geq \epsilon$ is lower bounded by
\begin{equation}
    \mathbb{P}(|\frac{\partial \Tr \left[\hat{H} \Psi_{\gamma}\right]}{\partial \gamma_k} \geq \epsilon)| \geq \Omega(e^{-(\mathcal{D}-1)\epsilon} - 0.2^{\frac{\mathcal{D}-1}{2}})
\end{equation}
where $\mathcal{D}=\mathcal{O}(1)$, independent of $n$. So, the probability $ \mathbb{P}(\frac{\partial \Tr \left[\hat{H} \Psi_{\gamma}\right]}{\partial \gamma_k} \geq \epsilon)$ is lower bounded by a constant independent of $n$. This means that the gradient will not decay as $n$ scales, which completes our proof of the absence of barren plateaus in LOCC-VQE.
\end{proof}

\subsection{Numerical evidence}
 We also numerically demonstrate the absence of barren plateaus through the training process of the transverse-Ising model defined in Section~\ref{appendix:tfIsing}. Quantum gradient information of the first optimization iteration in LOCC-VQE for the transverse-Ising model is recorded by the mean of the absolute values. The numerical results are shown in Fig.~\ref{fig:Gradients scaling with n}.

\begin{figure}[htbp!]
    \centering
    \includegraphics[width=0.45\linewidth]{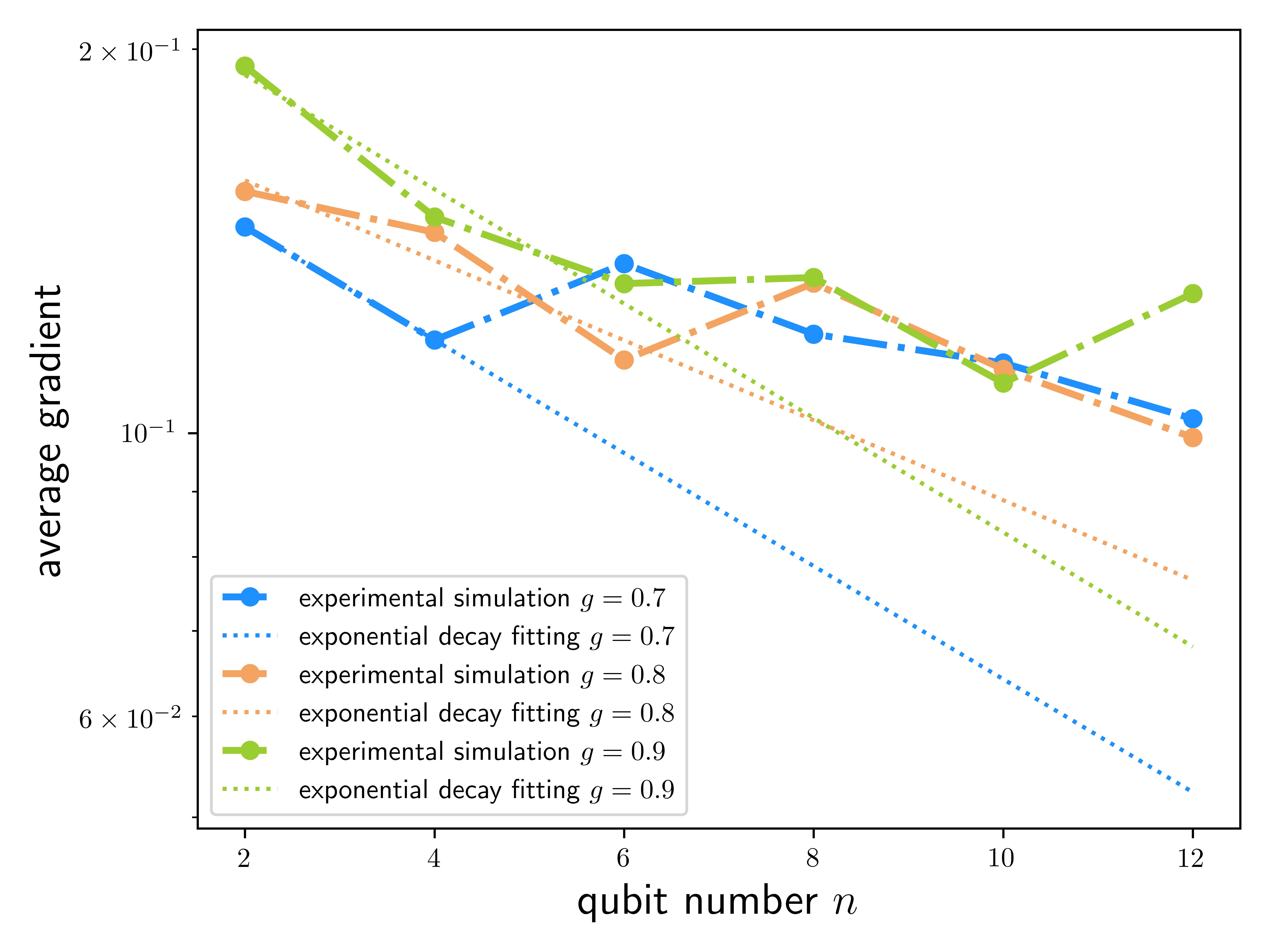}
    \caption{Average gradients scaling with $n$. The mean of the absolute value of the gradients of the first optimization iteration for $n\in[2,12]$. The dot lines represent the exponential decay fitted from the first few average gradient values. The blue, orange, and green curves represent different coupling coefficients $g=0.7,0.8,0.9$, respectively.}
    \label{fig:Gradients scaling with n}
\end{figure}

Notice that under conditions in Theorem~1, the gradients of variational LOCC-assisted circuit do not vanish as the number of qubits $n$ scales in the asymptotic limit. Due to computational limitations, we can only numerically implement smaller system sizes, where the nonvanishing property of the gradients is less obvious. This is because, from the light cone perspective, the boundaries of the system have a stronger effect on the propagation of information in smaller systems,  leading to the decay of the average gradient as shown in Fig.~\ref{fig:Gradients scaling with n}. However, even with a small system size, Fig.~\ref{fig:Gradients scaling with n} demonstrates that the quantum gradients for variational LOCC-assisted circuits have a significant separation remain significantly separated from exponentially decaying values under conditions in Theorem~1.

\section{Circuit ansatzes and training details}
\label{app:arch}

\subsection{Architecture of LOCC-assisted circuits}
\label{sec:Architecture of LOCC-assisted circuits}

In our simulation, we use circuits with tunable $\mathcal{SU}(4)$ \cite{khaneja_cartan_2000, earp_constructive_2005} in our circuit architecture. We use the Cartan decomposition of $\mathcal{SU}(4)$ to parameterize the unitary local two-qubit gates in our optimization process, as illustrated in Fig.~\ref{fig:Cartan decompostion}. $R_x,R_y,R_z$ are single-qubit rotation gates with generators $X,Y,Z$. $R_{xx},R_{yy},R_{zz}$ are two-qubit rotation gates with generators $X\otimes X,Y\otimes Y,Z\otimes Z$. Parametrized by $\theta_x,\theta_y,\theta_z,\theta_{xx},\theta_{yy},\theta_{zz}\in[0,2\pi)$, $R_x(\theta_x)R_y(\theta_y)R_z(\theta_z)$ forms an universal single qubit gate. Together with $R_{xx}(\theta_{xx})R_{yy}(\theta_{yy})R_{zz}(\theta_{zz})$, this block can represent a universal two-qubit unitary gate.

\begin{figure}[htbp!]
    \centering
    \includegraphics[width=0.7\linewidth]{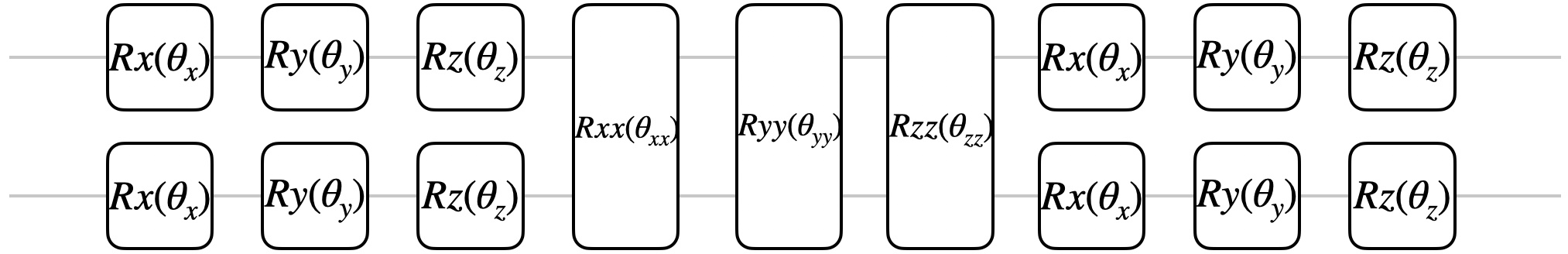}
    \caption{Cartan decomposition of unitary two-qubit gate.}
    \label{fig:Cartan decompostion}
\end{figure}

To prepare the GHZ state with perturbations and the ground states of the transverse-field Ising model, we adopt a similar architecture for LOCC-VQE. We first apply two layers of local two unitary gates parametrized through Cartan decomposition. Each local two-qubit gate acts on a data qubit and its neighboring ancillary qubit. We then measure all ancillary qubits and feed the measurement outputs into the LOCC protocol function $g$. We set the output of $g$ to be the parameters of one layer of the single-qubit rotation gate on all data qubits. We empirically set the LOCC protocol function $g$ to be the summation of one layer of neural network and neurons linking measurement results from far-apart ancillary qubits to enhance the power of preparing long-range entanglements. The empirical design of $g$ also shares the idea of the LOCC preparation protocol of the unperturbed GHZ state in \cite{piroli_quantum_2021}.

To prepare the ground state of the perturbed rotated surface code, we adopt a similar architecture of error correction. We first applied four layers of parametrized local two-qubit gates through the Cartan decomposition. Each two-qubit gate acts on a data qubit and an associated syndrome qubit following the stabilizer formalism of rotated surface code. Later, all ancillary qubits are measured, and the results are fed into the LOCC protocol function $g$. The output of $g$ will be set as the parameters of one layer of the single-qubit rotation gate on all data qubits at the end of the quantum circuit.

\subsection{Architecture of unitary circuits}
\label{appendix:brick circuit}
For all unitary VQE used in this paper, we adopted the parametrized brick wall quantum circuit as the circuit ansatz, in which all two-qubit gates are parametrized through the Cartan decomposition. The 1D brick wall quantum circuit has a structure as illustrated in Fig.~\ref{fig:brick wall quantum circuit}. It is formed by consecutive layers of interleaving parametrized local two-qubit gates with a compact layout.

\begin{figure}[htbp!]
    \centering
    \includegraphics[width=0.2\linewidth]{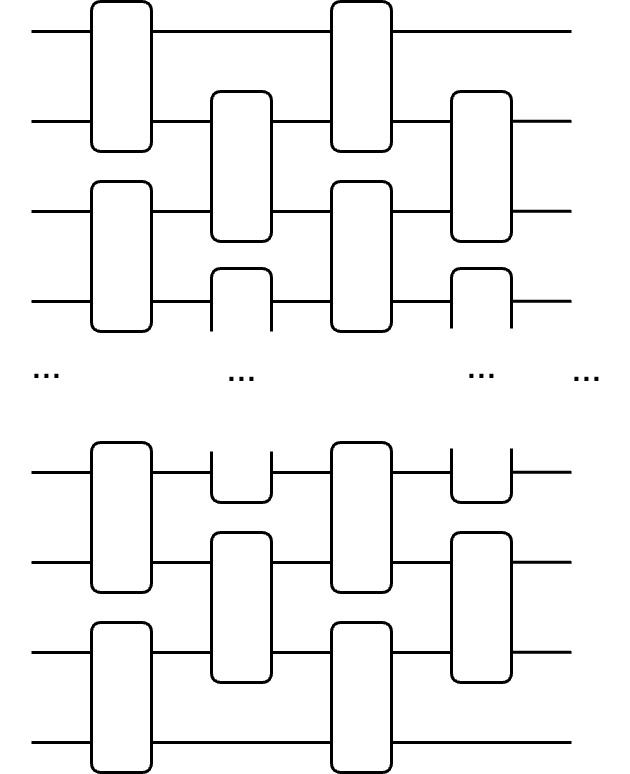}
    \caption{Brick-wall quantum circuit.}
    \label{fig:brick wall quantum circuit}
\end{figure}

For the 2D version, the brick wall quantum circuit is analogous to the 1D version, consisting of consecutive layers of interleaving parametrized near-neighbor two-qubit gates with a compact layout.

\subsection{Training details}

The numerical simulation is computationally demanding due to the exponential scaling of the computational time and memory as the qubit number $n$ scales up. We use the current state-of-the-art tensor-network-based quantum simulation technique \cite{zhang_tensorcircuit_2023} combined with an efficient machine learning framework and massive parallelization during the simulation. With efforts combined, we can simulate up to 20 qubits implementing LOCC-VQE. For parameter optimization, we utilize
the Adam optimizer with the learning rate set to be $0.01$.

To save numerical costs, we assume a sufficient number of samplings in most of our numerical simulations. We use block-diagonal unitary two-qubit gates to represent classical control and single-qubit rotations of the data qubits. We also use the sample-based version to prepare the ground state for a smaller system size. In each training iteration, we set the sample round to 100 and used a combination of one layer of neural network as well as empirically set functions to enhance the representability of the LOCC protocol function. We illustrate the results in Section~\ref{appendix:sample_based}.

Due to numerical costs, we didn't simulate a larger physical system or use more sample rounds in the sample-based simulation. However, since most computation resources are used for parallel sampling and quantum circuit simulation, real quantum experiments would not face this problem. The sampling cost of LOCC-VQE is the same as the unitary VQE protocol as discussed in the main text.

We summarized the number of iterations for circuit-parameter optimization in Table~\ref{tab:training_iteration}.

\begin{table}[h!]
\caption{\label{tab:training_iteration}
Numbers of iterations for circuit-parameter optimization.\footnote{To have fair comparisons, for each model, we give the experiments using LOCC-VQE and unitary VQE the same amount of iterations for each data point, and we also make sure that the number of iterations is enough for the optimization of unitary VQE to converge.}
}
\begin{tabular*}{0.9\textwidth}{@{\extracolsep{\fill}} c c c c }
\toprule

\multirow{2}{*}{\textrm{Model}}& \textrm{LOCC-VQE}
& \textrm{LOCC-VQE} & \multirow{2}{*}{\textrm{Unitary VQE}}\\
 &\textrm{(Sufficient samplings)} & \textrm{(Finite-shot samplings)} & \\
\colrule

8-qubit perturbed GHZ & \multirow{2}{*}{2000} & \multirow{2}{*}{-} & \multirow{2}{*}{2000} \\
($X$ perturbation) & & & \\
\colrule

8-qubit perturbed GHZ & \multirow{2}{*}{2000} & \multirow{2}{*}{-} & \multirow{2}{*}{2000} \\
($Y$ perturbation) & & & \\
\colrule

8-qubit perturbed GHZ &\multirow{2}{*}{2000} & \multirow{2}{*}{-} & \multirow{2}{*}{2000} \\
($Z$ perturbation) & & & \\
\colrule

4-qubit perturbed GHZ & \multirow{2}{*}{-} & \multirow{2}{*}{2000} & \multirow{2}{*}{2000} \\
($X$ perturbation) & & & \\
\colrule

4-qubit perturbed GHZ & \multirow{2}{*}{-} & \multirow{2}{*}{2000} & \multirow{2}{*}{2000} \\
($Y$ perturbation) & & & \\
\colrule

4-qubit perturbed GHZ & \multirow{2}{*}{-} & \multirow{2}{*}{2000} & \multirow{2}{*}{2000} \\
($Z$ perturbation) & & & \\
\colrule

Perturbed rotated surface code &\multirow{2}{*}{2000} & \multirow{2}{*}{-} & \multirow{2}{*}{2000} \\
($X$($Z$) perturbation) & & & \\
\colrule

Perturbed rotated surface code &\multirow{2}{*}{2000} & \multirow{2}{*}{-} & \multirow{2}{*}{2000} \\
($Y$ perturbation) & & & \\

\colrule
\multirow{2}{*}{8-qubit transverse-field Ising} & \multirow{2}{*}{2000}  & \multirow{2}{*}{-}  & \multirow{2}{*}{2000} \\
 & & & \\
 \colrule

\multirow{2}{*}{4-qubit transverse-field Ising} & \multirow{2}{*}{-}  & \multirow{2}{*}{2000}  & \multirow{2}{*}{2000} \\
 & & & \\
\colrule

Perturbed toric code & \multirow{2}{*}{6000} & \multirow{2}{*}{-} & \multirow{2}{*}{6000} \\
($X$($Z$) perturbation) & & & \\
\colrule

Perturbed toric code & \multirow{2}{*}{-} & \multirow{2}{*}{10000} & \multirow{2}{*}{10000} \\
($Y$ perturbation) & & & \\
\colrule

\botrule
\end{tabular*}
\end{table}

\subsection{Structures of LOCC protocols}

\begin{figure}[ht!]
    \centering
    \includegraphics[width=0.75\linewidth]{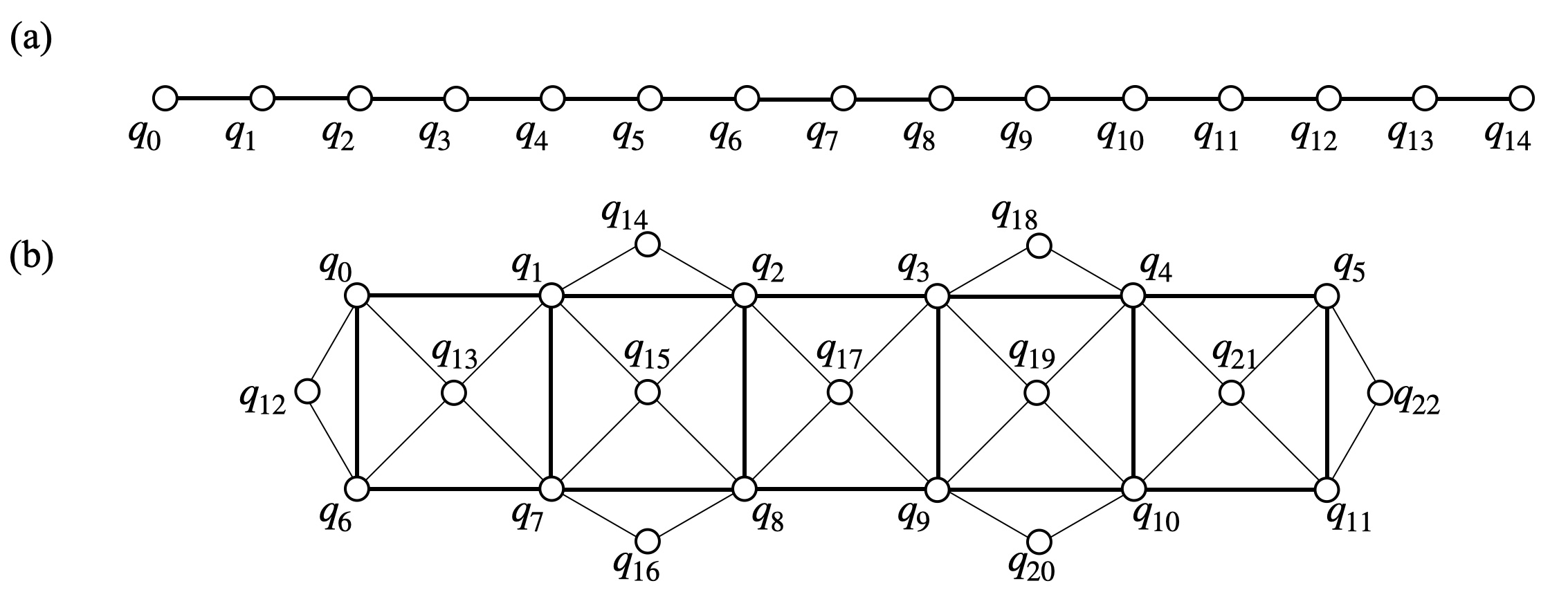}
    \caption{Indices of qubits. \textbf{(a)} 1D chain model. \textbf{(b)} Rotated surface code model.}
    \label{fig:qubit_indexing}
\end{figure}
We index the qubits following Fig.~\ref{fig:qubit_indexing} for both the 1D chain model and the rotated surface code model.

In the following, we give how we design the structures of the LOCC protocol functions $g$ in numerical experiments with unlimited shots: the perturbed GHZ state in Table~\ref{tab:g_function_for_GHZ}, the transverse-field Ising model in Table~\ref{tab:g_function_for_TF}, and the perturbed rotated surface code in Table~\ref{tab:g_function_for_PRSC}.

\begin{table}[ht!]
\caption{\label{tab:g_function_for_GHZ}
Structure of the LOCC protocol function in LOCC-VQE for perturbed GHZ state preparation.
}
\begin{tabular*}{0.75\textwidth}{@{\extracolsep{\fill}} c c c c }
\toprule
\textrm{Layer}&
\textrm{Indices of Control Bits\footnote{The indices used for bits are in correspondence with the indices of qubits where measurement are performed on.}}&
\textrm{Classical Logical Operations}&
\textrm{Indices of Target Qubits\footnote{For target qubits we mean the set of qubits where single-qubit unitary gates act on, the parameters of these gates, as defined in Section~\ref{sec:Architecture of LOCC-assisted circuits} are variationally determined based on the values of the control bits.}}\\
\colrule
1 & 9 & - & 1\\
2 & 10 & - & 3\\
3 & 12 & - & 5\\
4 & 14 & - & 7\\
5 & - & $8 \leftarrow 8\oplus9$ \footnote{`$a\oplus b$' stand for the classical XOR operation on the values in the classical register a and b, and $c\leftarrow d$ represents replacing the value in the classical register c with the value in d.} & -\\
6 & - &$10 \leftarrow 10\oplus11$ & -\\
7 & - &$12 \leftarrow 12\oplus13$ & -\\
8 & 8 & - & 2\\
9 & 12 &- & 6\\
10 & 10 &- & 3\\
11 & - &$8 \leftarrow 8\oplus10$ & -\\
12 & - &$12 \leftarrow 12\oplus14$ & -\\
13 & 8 &- & range[4:8]\footnote{The term ``rang[a:b]'', standing for the set $\{a,a+1,\ldots,b-1\}$.}\footnote{We use a set of indices to simplify the representation of each (qu)bit indexed by the element in this set of indices as the control bit/target qubit independently.}\\
14 &-&$8 \leftarrow 8\oplus12$&-\\
15 & range[8:15]&-&range[0:8]\\
\botrule
\end{tabular*}
\end{table}

\begin{table}[ht!]
\caption{\label{tab:g_function_for_TF}
Structure of the LOCC protocol function in LOCC-VQE for the transverse-field Ising model's ground state preparation.\footnote{The notations used in this table follow the definitions in Table~\ref{tab:g_function_for_GHZ}.}
}
\begin{tabular*}{0.75\textwidth}{@{\extracolsep{\fill}} c c c c }
\toprule
\textrm{Layer}&
\textrm{Indices of Control Bits}&
\textrm{Classical Logical Operations}&
\textrm{Indices of Target Qubits}\\
\colrule
1 & range[8:15]&-&range[0:8]\\
2 & 9 & - & 1\\
3 & 10 & - & 3\\
4 & 12 & - & 5\\
5 & 14 & - & 7\\
6 & - & $8 \leftarrow 8\oplus9$  & -\\
7 & - &$10 \leftarrow 10\oplus11$ & -\\
8 & - &$12 \leftarrow 12\oplus13$ & -\\
9 & 8 & - & 2\\
10 & 12 &- & 6\\
11 & 10 &- & 3\\
12 & - &$8 \leftarrow 8\oplus10$ & -\\
13 & - &$12 \leftarrow 12\oplus14$ & -\\
14 & 8 &- & range[4:8]\\
15 &-&$8 \leftarrow 8\oplus12$&-\\
\botrule
\end{tabular*}
\end{table}

\begin{table}
\centering
\caption{\label{tab:g_function_for_PRSC}
Structure of the LOCC protocol function in LOCC-VQE for the perturbed rotated surface code state preparation.\footnote{The notations used in this table follow the definitions in Table~\ref{tab:g_function_for_GHZ}.}
}
\begin{tabular*}{0.75\textwidth}{@{\extracolsep{\fill}} c c c c }
\toprule
\textrm{Layer}&
\textrm{Indices of Control Bits}&
\textrm{Classical Logical Operations}&
\textrm{Indices of Target Qubits}\\
\colrule
1 & range[12:19]&-&range[0:12]\\
2 & - & $18 \leftarrow 12\oplus18$ & -\\
3 & - & $17 \leftarrow 13\oplus17$ & -\\
4 & - & $16 \leftarrow 14\oplus16$ & -\\
5 & 18 & - & range[0:12]\\
6 & 17 & -  & range[0:12]\\
7 & 16 & - & range[0:12]\\
8 & - &$14 \leftarrow 13\oplus14$ & -\\
9 & - & $17 \leftarrow 16\oplus17$ & -\\
10 & 13 &- & range[0:12]\\
11 & 16 &- & range[0:12]\\
12 & - &$15 \leftarrow 12\oplus15\oplus18$ & -\\
13 & 15 &- & range[0:12]\\
14 & - &$15 \leftarrow 13\oplus15\oplus16$ & -\\
15 & - &$12 \leftarrow 12\oplus14\oplus17$ & -\\
16 & 12 &- &range[0:12]\\
17 & 15 &- &range[0:12]\\
18 & - &$18 \leftarrow 14\oplus16\oplus18$ & -\\
19 & 18 &- & range[0:12]\\
20 & - &$15 \leftarrow 14\oplus15\oplus16$ & -\\
21 & 15 &- & range[0:12]\\

\botrule
\end{tabular*}
\end{table}

\section{Supplementary numerical results}
\label{app:suppnum}

\subsection{Greenberger–Horne–Zeilinger state with various perturbations}
\label{sec:GHZ_more_results}
Besides the Pauli $X$ perturbation shown in Fig.~3 in the main text, we also numerically simulate the GHZ state with Pauli $Y$ and Pauli $Z$ perturbations. The results are illustrated in Fig.~\ref{fig:GHZ_8_YandZ_EandQMI}, LOCC-VQE achieves $10^{-3}$ relative error in ground state energy while unitary VQE can only achieve $10^{-1}$ relative error.

\begin{figure}[htbp!]
    \centering
    \includegraphics[width=0.75\linewidth]{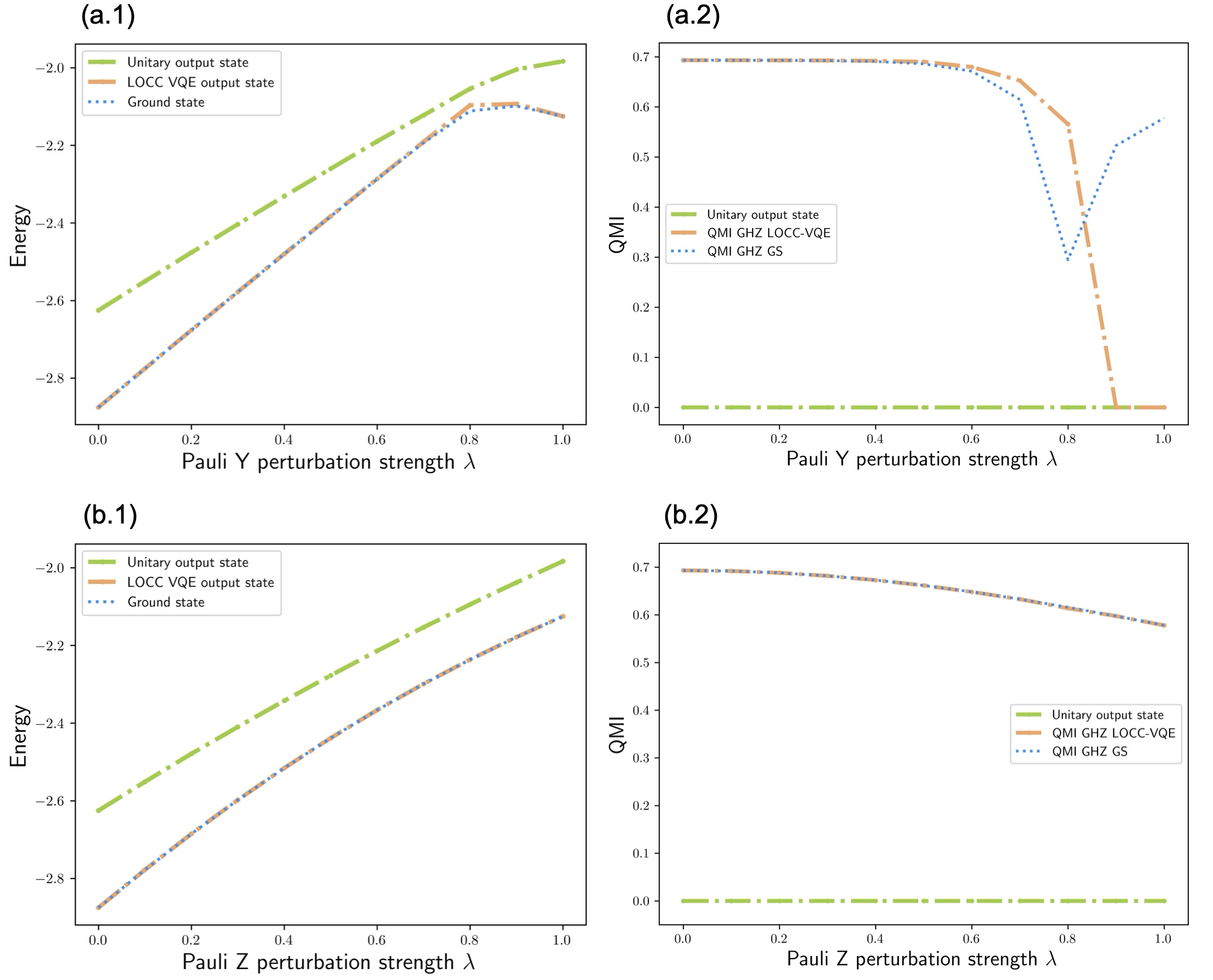}
    \caption{Numerical simulation results of solving the parent Hamiltonian of the $8$-qubit GHZ state with Pauli $Y$ and $Z$ perturbations with depth two circuits. \textbf{(a)} Comparison between the energy optimization results through LOCC-VQE and unitary VQE with depth two circuits. \textbf{(b)} Comparison between the QMI between subsystems. subsystem $A$ and $C$ as shown in Fig.~2~(a).}
    \label{fig:GHZ_8_YandZ_EandQMI}
\end{figure}
\subsection{Transverse-field Ising model}
\label{appendix:tfIsing}

As another case of perturbated GHZ states, we also test the 1D transverse-field Ising model,
\begin{equation}
    \hat{H}_{\text{tfIsing}}=-\sum_{\langle i, j\rangle} Z_i Z_j - \lambda \sum_j X_j.
    \label{eq: tfIsing}
\end{equation}
The difference is that we do not introduce the degeneracy-breaking term, namely setting $h=0$ in Eq.~(5). The model exhibits long-range entanglement at its critical point $\abs{\lambda}=1$ with a sufficiently large system size.

We numerically compare the energy accuracy of LOCC-VQE in solving a $8$-qubit transverse-field Ising model to its unitary counterpart, with results shown in Fig.~\ref{fig:Energy Ising 8}. The result suggests the advantages of LOCC-VQE when $\lambda$ is near or larger than $1$, the phase transition point; and when $\lambda < 1$, LOCC-VQE still exhibits comparable accuracy with unitary VQE. This result suggests the general applicability of our approach.

\begin{figure}[htbp!]
    \centering
    \includegraphics[width=0.75\linewidth]{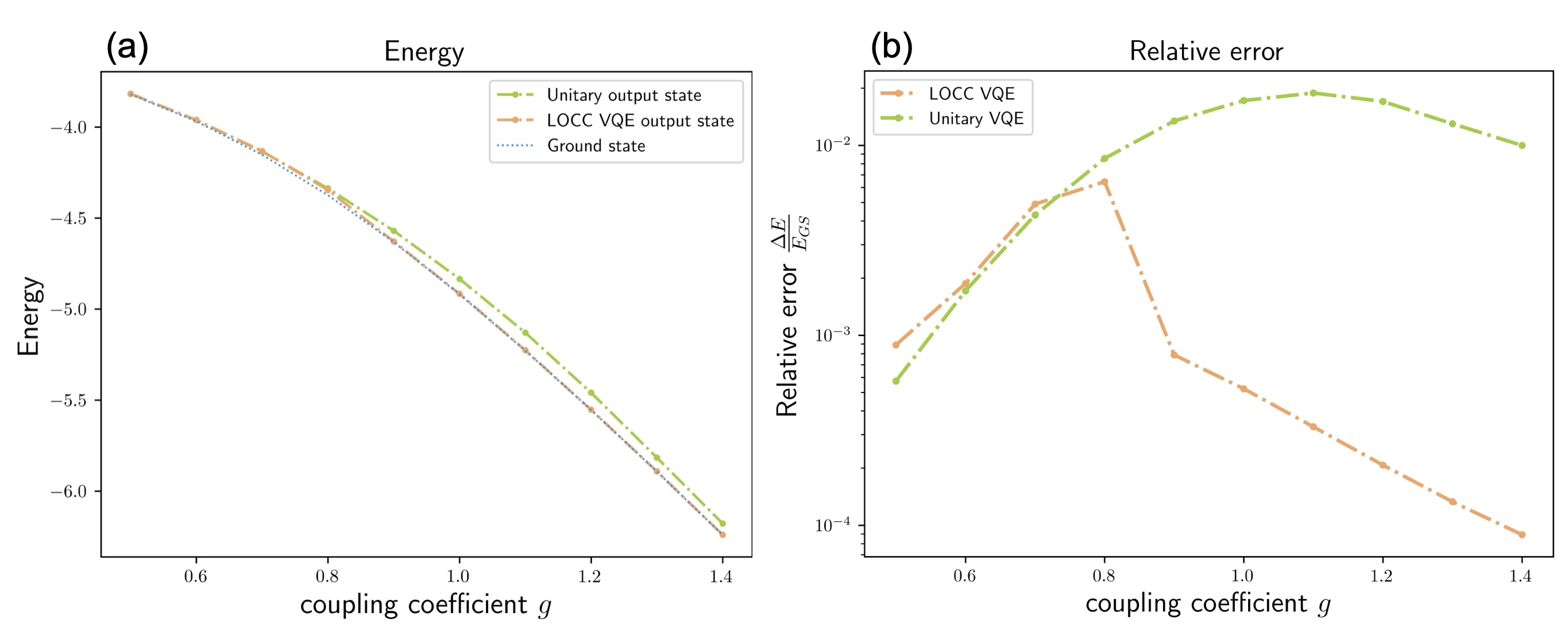}
    \caption{Numerical simulation results of solving the $8$-qubit transverse-field Ising model. \textbf{(a)} Comparison between the energy optimization results through LOCC-VQE and unitary VQE with depth two circuits. \textbf{(b)} Comparison between the relative error of ground state energy optimization results, $\frac{\Delta E}{E_{GS}}=\frac{E-E_{GS}}{E_{GS}}$, through LOCC-VQE and unitary VQE with depth two circuits.}

    \label{fig:Energy Ising 8}
\end{figure}

\subsection{Toric code compared with measurement-based variational quantum eigensolver}

The toric code is a quantum error-correcting code defined on a two-dimensional rectangular lattice with periodic boundary conditions. The ground states of the toric code Hamiltonian possess long-range entanglement, enabling the storage of logical information. In previous work \cite{ferguson_measurement-based_2021}, this model, under perturbations, is used to test measurement-based variational quantum eigensolver (MB-VQE). Here, we use it to also test our approach, which exhibits a more accurate result.

\begin{figure}[htbp!]
    \centering
    \includegraphics[width=0.5\linewidth]{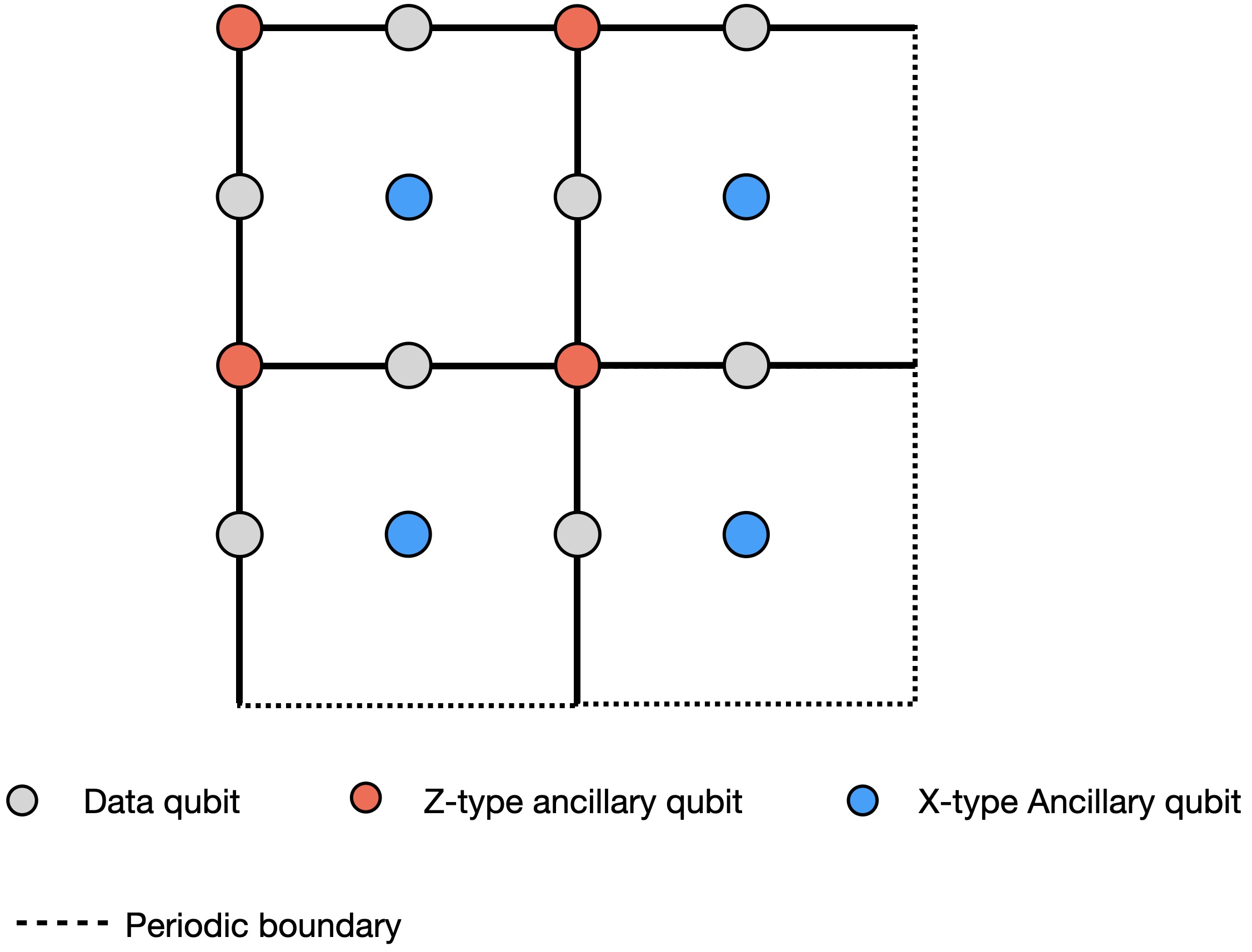}
    \caption{Two-dimensional toric code's lattice. An edge represents a data qubit, a vertex represents a Z-type ancillary qubit corresponding to a Z-stabilizer, and a plaque represents an X-type ancillary qubit corresponding to an X-stabilizer. The dotted lines on the boundary represent periodic boundary conditions, i.e., the left-most and the right-most edges are equivalent, and the upper-most and the lower-most edges are equivalent.}
    \label{fig:Toric code}
\end{figure}

In our numerical tests, we add a magnetic field in the $Z$ direction as a perturbation, resulting in the following Hamiltonian
\begin{equation}
\label{eq: H_toric}
    \hat{H}_{\text{tor}}(\lambda) = -(1-\lambda)\sum_v A_v - (1-\lambda)\sum_p B_p - \lambda \sum_{i=1}^{N_xN_y} P_i,
\end{equation}
where $N_x$ and $N_y$ are width and height of the regular lattice, $A_v$ and $B_p$ are stabilizers for the unperturbed rotated surface code, and $\lambda$ is the perturbation strength and $P_i \in X_i, Y_i, Z_i$ on the $i$-th site. The $Z$-type stabilizers $A_v$ and $X$-type stabilizers $B_p$ correspond to vertices and plaques on the lattice respectively, illustrated in Fig.~\ref{fig:Toric code}.

\begin{figure}[htbp!]
    \centering
    \includegraphics[width=0.75\linewidth]{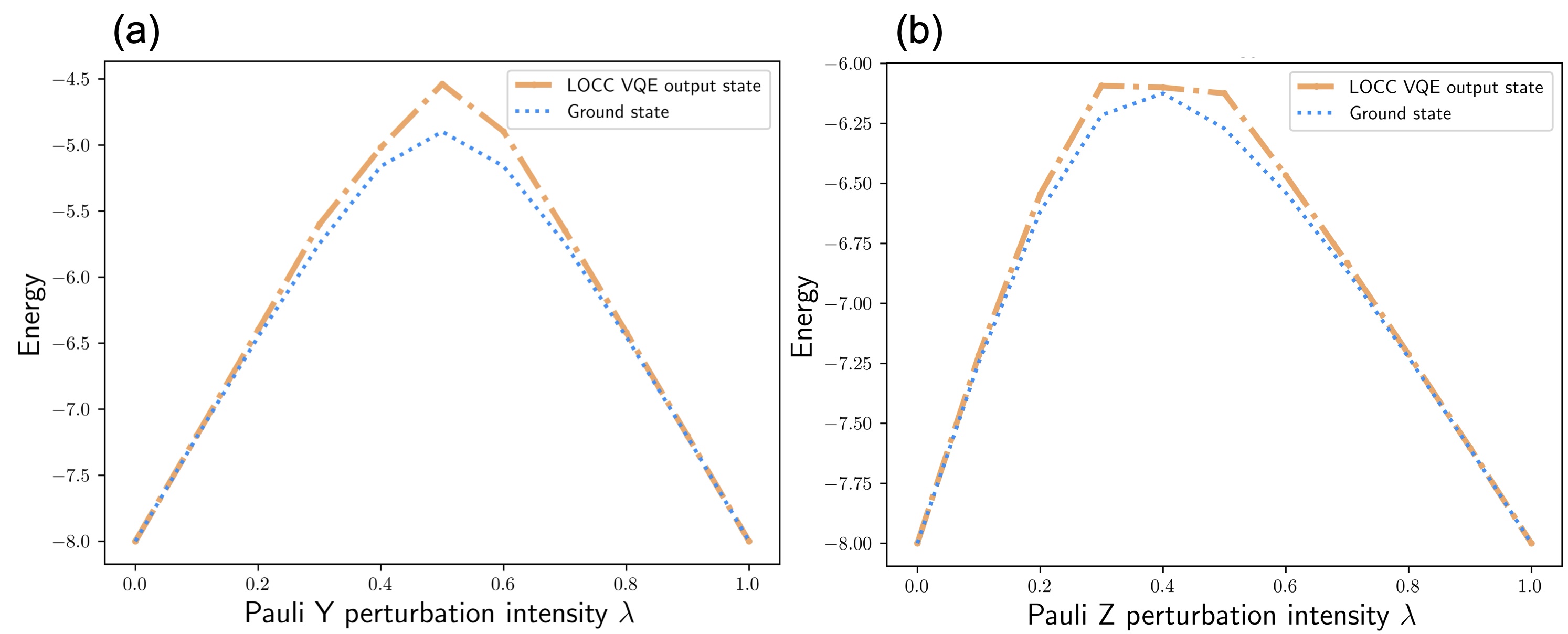}

    \caption{Numerical simulation results of the toric code with Pauli $Y$ \textbf{(a)} perturbation and Pauli $X$(Z) \textbf{(b)} perturbation. Due to symmetry, Pauli $X$ perturbation is equivalent to Pauli $Z$ perturbation up to a change of basis.}
    \label{fig:Energy_toric_code}
\end{figure}

The numerical results are illustrated in Fig.~\ref{fig:Energy_toric_code}. We can achieve $10^{-2}$ accuracy in relative error among all perturbation strengths. We compare the results of LOCC-VQE with the result of MB-VQE \cite{ferguson_measurement-based_2021} over the same perturbed toric code model. LOCC-VQE can achieve higher accuracy than MB-VQE, as illustrated in Table~\ref{tab:compare with MBVQE}.

\begin{table}[htbp!]
\caption{\label{tab:compare with MBVQE}
Comparison between the relative error of preparing the ground states of perturbed toric code by LOCC-VQE and MB-VQE.
}
\begin{tabular*}{0.75\textwidth}{@{\extracolsep{\fill}} c c c }
\toprule
\textrm{$\max\limits_{\lambda}(\frac{\Delta E}{E_{GS}})$\footnote{The form of perturbation used in MB-VQE is slightly different from the form in Eq.~\eqref{eq: H_toric}. However, we can still compare the results by calculating the relative error.}}&
\textrm{Pauli $Y$ perturbation}&
\textrm{Pauli $Z$(X) perturbation }\\
\colrule
\textrm{LOCC-VQE} & 0.0735 & 0.0234 \\
\textrm{MB-VQE \cite{ferguson_measurement-based_2021}} & -\footnote{Pauli $Y$ perturbation is not demonstrated in this reference.} & 0.0378 \\
\botrule
\end{tabular*}

\end{table}

\subsection{Finite-shot simulation}
\label{appendix:sample_based}
We numerically simulate the sampling process during midcircuit measurements, and we call it sampling-based LOCC-VQE in our numerical simulations. The results of using sampling-based LOCC-VQE for the 4-qubit GHZ states with perturbations are shown in Fig.~\ref{fig:GHZML4}, and the results for transverse-field Ising model are depicted in Fig.~\ref{fig:IsingML4}. As shown in the numerical results, the states prepared by LOCC-VQE have $10^{-2}$ relative error for preparing GHZ state with Puali $X$ and $Y$ perturbations, $10^{-4}$ relative error for preparing GHZ state with Puali $Z$ perturbation, and $10^{-2}$ relative error for preparing ground state of the transverse-field Ising model, even with the presence of sample inaccuracy (40 rounds of samples(shots) for each measurement in the experiments for perturbed GHZ state preparation and 1000 rounds of samples(shots) for each measurement in the experiments for preparing ground state of the transverse-field Ising model).

\begin{figure}[htbp!]
    \centering
    \includegraphics[width=0.9\linewidth]{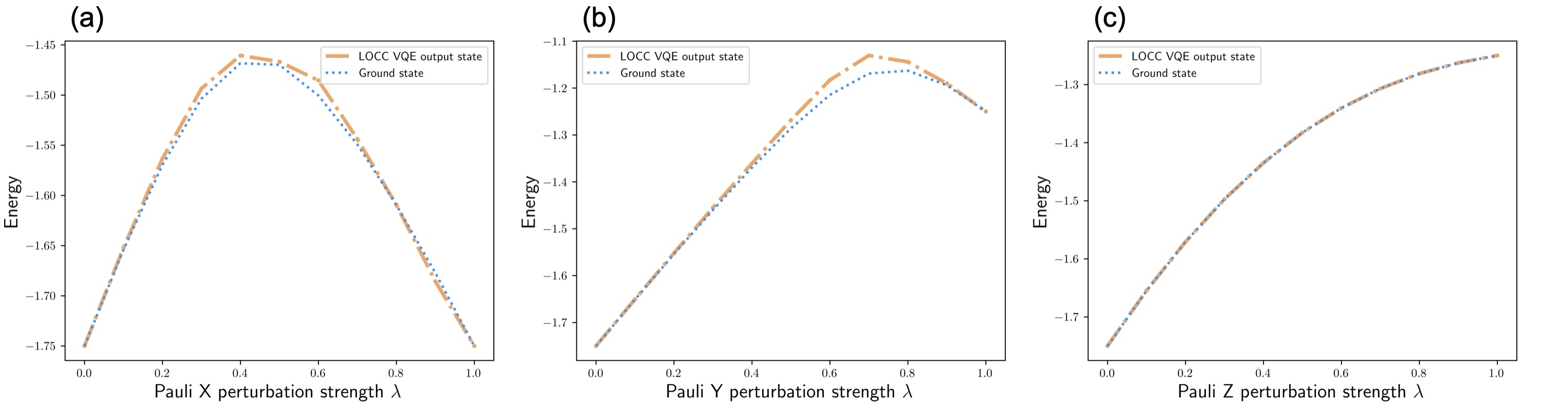}
    \caption{Numerical simulation results of solving the parent Hamiltonian of the four-qubit GHZ state with Pauli $X$ \textbf{(a)}, Y \textbf{(b)}, and Z \textbf{(c)} perturbations. The results are achieved through sample-based LOCC-VQE.}
    \label{fig:GHZML4}
\end{figure}

\begin{figure}[htbp!]
    \centering
    \includegraphics[width=0.4\linewidth]{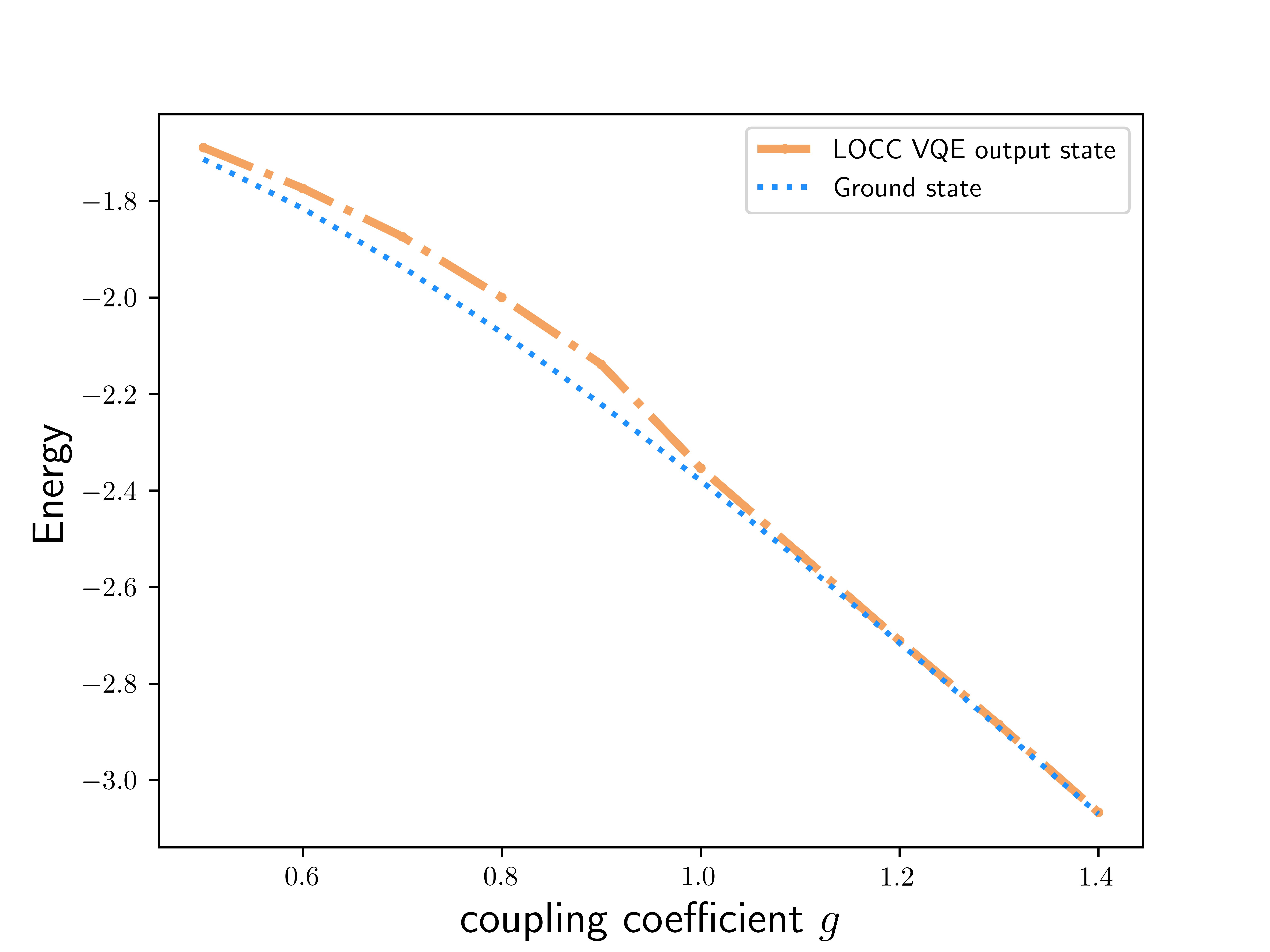}
    \caption{Numerical simulation results of solving transverse-field Ising model with four qubits by LOCC-VQE with finite shots.}
    \label{fig:IsingML4}
\end{figure}

\section{Limitations of unitary circuits in creating long-range entanglement} \label{appendix:QMI_def}

Quantum mutual information (QMI) measures the correlation between subsystems of the quantum state. It is a quantum mechanical analog of Shannon's mutual information. Here we revisit the concept and use it to demonstrate the limitations of unitary circuits in terms of long-range entanglement, quantified by QMI between distant parts.

Consider a quantum system that can be divided into two non-overlapping subsystems $A$ and $B$. The Hilbert space can be written as the tensor product of two sub-spaces corresponding to the division $\mathcal{H}_{AB} = \mathcal{H}_A\otimes\mathcal{H}_B$. For a quantum state $\rho^{AB}$, $\rho^A=\Tr_B(\rho^{AB})\in \mathcal{D}(\mathcal{H}_A)$, and $\rho^B=\Tr_A(\rho^{AB})\in\mathcal{D}(\mathcal{H}_B)$ represent the reduced density matrices of $\rho^{AB}$ each subsystem. Then QMI between subsystems $A$ and $B$ is defined as $I(A,B) = S(\rho^A)+S(\rho^B)-S(\rho^{AB})$, where $S(\cdot)$ is the von Neumann entropy.

QMI between distant parts can be used as a probe of long-range entanglement. Here, we establish this relation by showing the impossibility of a unitary circuit in establishing correlations outside the light cones.

\begin{lemma}
    If two subsystems $A$ and $B$ of the output quantum state of a unitary circuit have non-overlapping information propagation light cones, the quantum mutual information between them is zero.
\end{lemma}

\begin{proof}
Let $\mathcal{L}_A$ be the index set of qubits in the light cone of subsystem $A$, and $\mathcal{L}_B$ be the index set of qubits in the light cone of subsystem $B$. Since subsystems $A$ and $B$ of the output quantum state of this quantum circuit have non-overlapping light cones, we have
\begin{equation}
    \mathcal{L}_A \cap \mathcal{L}_B = \emptyset.
\end{equation}

Denote the $n$-qubit output state of the circuit as $\rho$. Denote the output state tracing out qubits not in the joint light cone of $A$ and $B$ by $\rho^{\mathcal{L}_A,\mathcal{L}_B} = \Tr_{[n]\setminus(\mathcal{L}_A\cup\mathcal{L}_B)}(\rho)$ and tracing out qubits not in the light cone of $A$ and $B$, respectively, by $\rho^{\mathcal{L}_A} = \Tr_{[n]\setminus\mathcal{L}_A}(\rho)$ and $\rho^{\mathcal{L}_B} = \Tr_{[n]\setminus\mathcal{L}_B}(\rho).$ Without overlap between the  light cones, we can write $\rho^{\mathcal{L}_A,\mathcal{L}_B}$ in the product form $\rho^{\mathcal{L}_A,\mathcal{L}_B} = \rho^{\mathcal{L}_A}\otimes\rho^{\mathcal{L}_B}$. Notice that $A\in\mathcal{L}_A$, and $B\in\mathcal{L}_B$, we can express subsystem $A$ and $B$ of the output state as $\rho^{AB} = \rho^A \otimes \rho^B$, where $\rho^A = \Tr_{[n]\setminus A}(\rho)$, and $\rho^B = \Tr_{[n]\setminus B}(\rho)$. Thus, we have
\begin{equation}
I(A,B) = S(\rho^A)+S(\rho^B)-S(\rho^{AB}) = S(\rho^A)+S(\rho^B) - S(\rho^A\otimes\rho^B) = 0.
\end{equation}
\end{proof}

\end{document}